\def\ps@headings{%
\def\@oddhead{\mbox{}\scriptsize\rightmark \hfil \thepage}%
\def\@evenhead{\scriptsize\thepage \hfil \leftmark\mbox{}}%
\def\@oddfoot{}%
\def\@evenfoot{}}
\newtheorem{thm}{Theorem}
\newtheorem{lemma}[thm]{Lemma}
\begin{document}

\title{A Robust Information Source Estimator with Sparse Observations}
\author{\IEEEauthorblockN{Kai Zhu and Lei Ying}\\
\IEEEauthorblockA{School of Electrical, Computer and Energy Engineering\\ Arizona State University\\
Tempe, AZ, United States, 85287\\
Email: kzhu17@asu.edu, lei.ying.2@asu.edu}}

\maketitle
\begin{abstract}
In this paper, we consider the problem of locating the information source with sparse observations. We assume that a piece of information spreads in a network following a heterogeneous susceptible-infected-recovered (SIR) model and that a small subset of infected nodes are reported, from which we need to find the source of the information. We adopt the sample path based estimator developed in \cite{ZhuYin_12}, and prove that on infinite trees,  the sample path based estimator is a Jordan infection center with respect to the set of observed infected nodes. In other words, the sample path based estimator minimizes the maximum distance to observed infected nodes.  We further prove that the distance between the estimator and the actual source is upper bounded by a constant independent of the number of infected nodes with a high probability on infinite trees. Our simulations on tree networks and real world networks show that the sample path based estimator is closer to the actual source than several other algorithms.

\end{abstract}
\section{Introduction}\label{sec:introduction}

In this paper, we are interested in locating the source of information that spreads in a network by using sparse observations. The solution to this problem has important applications such as locating the sources of epidemics, news/rumors in social networks or online computer virus. The problem has been studied in \cite{ShaZam_10,ShaZam_11,ShaZam_12,LuoTayLen_12} under a homogeneous susceptible-infection (SI) model and in \cite{ZhuYin_12} under a homogeneous susceptible-infection-recover (SIR) model, assuming that a complete snapshot of the network is given.

While \cite{ShaZam_10,ShaZam_11,ShaZam_12,LuoTayLen_12,ZhuYin_12} answered some fundamental questions about information source detection in large-scale networks, a complete snapshot of a real world network, which may have hundreds of millions of nodes, is expensive to obtain. Furthermore, these works assume homogeneous infection across links and homogeneous recovery across nodes, but in reality, most networks are heterogeneous. For example, people close to each other are more likely to share rumors and epidemics are more infectious in the regions with poor medical care systems. Therefore, it is important to take sparse observations and network heterogeneity into account when locating information sources. In this paper, we consider a heterogeneous SIR model and assume only a small subset of infected nodes are reported to us. The goal is to  identify the information source in a heterogeneous network by using sparse observations.

We use the sample path based approach developed in \cite{ZhuYin_12} for locating the information source with sparse observations. Surprisingly, we find that the sample path based estimator is robust to network heterogeneity and the number of observed infected nodes. In particular, our results show that even under a heterogeneous SIR model and with sparse observations, the sample path based estimator remains to be a Jordan infection center in infinite trees, where the Jordan infection center with a partial observation is the node that minimizes the maximum distance to observed infected nodes. We further show that in an infinite tree, the distance between a Jordan infection center and the actual source can be bounded by a value independent of the size of infected subnetwork with a high probability, where the infected subnetwork is the subnetwork that consists of nodes are either infected or recovered and is a connected component. Assume the size of the infected subnetwork is $n,$ the result says that a Jordan infection center is  a distance of $O(1)$ from the actual source.

We remark that the locations of the Jordan centers only depend on the network topology and are independent of the infection and recovery probabilities, so the sample path based estimators (or the Jordan infection centers) are also robust to the information diffusion models, which makes it very appealing in practice since the accurate knowledge of the SIR parameters can be difficult to measure in reality.

\subsection{Related Works}
Other than \cite{ShaZam_10,ShaZam_11,ShaZam_12,LuoTayLen_12,ZhuYin_12}, there are several related works in this area including: (1) detecting the first adopter of an innovation based on game theory \cite{SubBer_12}, in which the maximum likelihood estimator is derived but the computational complexity of finding the estimator is exponential in the number of nodes; (2) distinguishing epidemic infection from random infection under the SI model \cite{MilCarSha_12}; (3) geospatial abduction which deals with reasoning certain locations in a two-dimensional geographical area that can explain observed phenomena \cite{ShaSubSap_11,ShaSub_11}. A recent paper {\cite{LokMezOht_13} also proposed a dynamic message passing algorithm (DMP) to detect the information source under a general SIR model with complete or partial observations. However, the algorithm needs the complete information of infection and recovery probabilities. In addition, the complexity of DMP is very high under partial observations since almost all nodes in the network are candidates of the source, and the calculation needs to be repeated for every possible candidate. In the simulations, we will show that our algorithm significantly outperforms DMP in terms of both accuracy and speed. We will see that our algorithm is $400\times$ faster even when we limit the DMP algorithm to a subnetwork. 

\section{A heterogeneous SIR Model}
In this section, we introduce the heterogeneous SIR model for information propagation. Different from the homogeneous SIR model in which infection and recovery probabilities are both homogeneous \cite{ZhuYin_12}, the heterogeneous SIR model we consider allows different infection probabilities at different links and different recovery probabilities at different nodes.

Consider an undirected graph $G=\{\cal{V},\cal{E}\},$ where $\cal{V}$ is the set of nodes and ${\cal E}$ is the set of edges. Denote by $(u,v)\in{\cal E} $ the edge between node $u$ and node $v.$ Each node $v\in\cal{V}$ has three states: susceptible ($S$), infected ($I$), and recovered ($R$). Time is slotted. At the beginning of each time slot, each infected node attempts to contact all its susceptible neighbors. A contact from node $u$ to node $v$ {\em succeeds} with probability $q_{uv}.$ A susceptible node becomes infected after being {\em successfully} contacted by one of its infected neighbors. At the middle of each time slot, an infected node, {\em if it is infected before the current time slot,} recovers with probability $p_v.$ A recovered node cannot be infected again. We assume that contacts succeed independently across links and time slots; and nodes recover independently across nodes and time slots.

Consider a network shown in Figure \ref{fig: model-example}, where node $e$ is in the susceptible state, nodes $a$ and $c$ are in the infected state and nodes $b$ and $d$ are in the recovered state. Then at the next time slot, node $e$ becomes infected with probability $$1-(1-q_{ae})(1-q_{ce}),$$ and nodes $a$ and $c$ recover with probability $p_a$ and $p_c,$ respectively.
\begin{figure}[htb]
\begin{centering}
  \includegraphics[width=0.42\columnwidth]{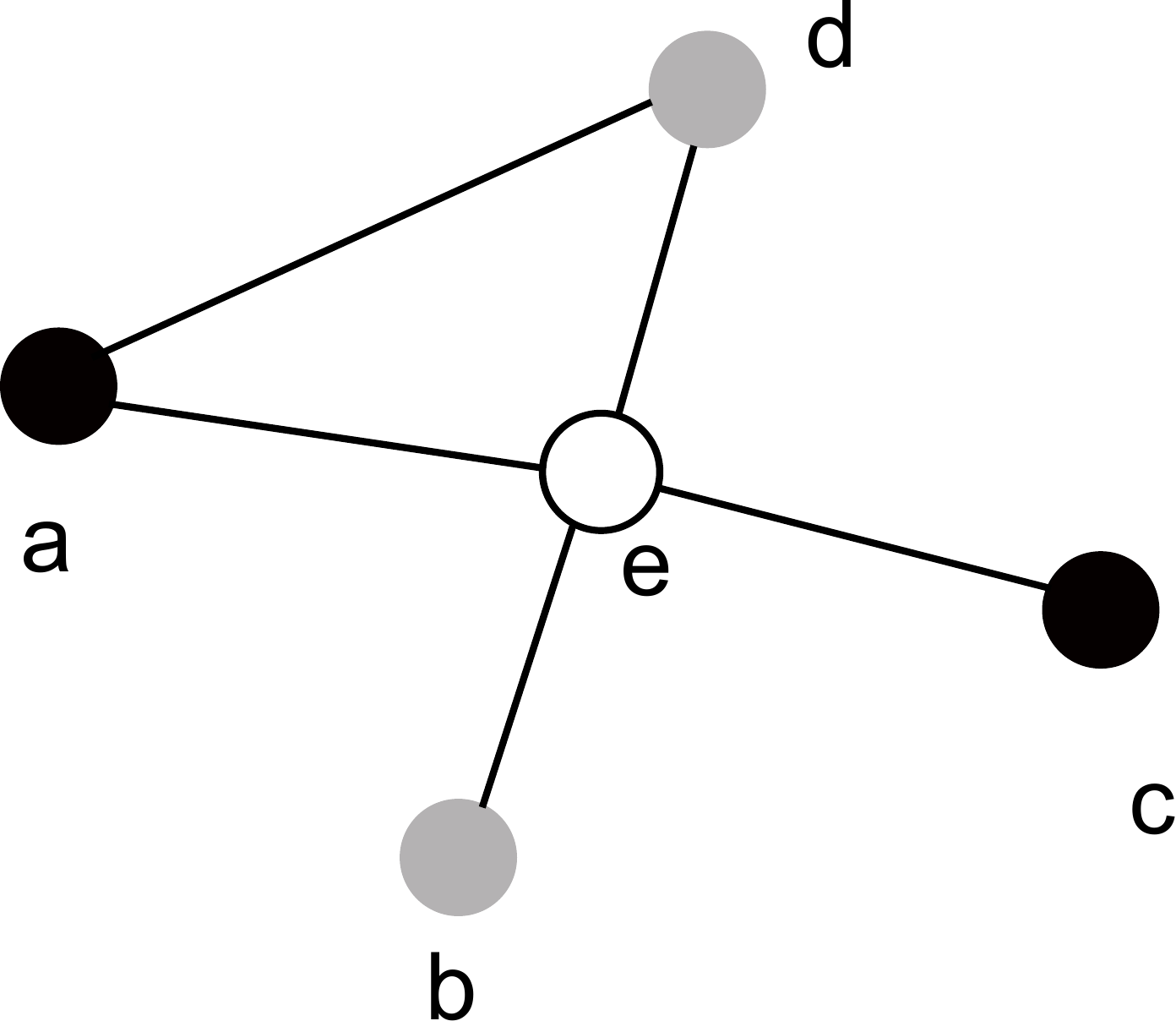}
  \caption{An example for illustrating the heterogeneous SIR model}\label{fig: model-example}
  \end{centering}
\end{figure}

\section{Problem Formulation}
In this section, we formally define the problem of information source detection. Adopting the notation in \cite{ZhuYin_12}, we define $X_v(t)$ to be the states of node $v$ at the end of time slot $t$ such that
\[
X_v(t)=\left\{
               \begin{array}{ll}
                 S, & \hbox{if $v$ is in state $S$ at time $t$;} \\
                 I, & \hbox{if $v$ is in state $I$ at time $t$;} \\
                 R, & \hbox{if $v$ is in state $R$ at time $t$.}
               \end{array}
             \right.
\]
Let ${\bf X}(t)=\{X_v(t):\forall v \in {\cal V}\}$ denote the state of all nodes in the network at time $t.$

In this paper, we assume that we only have {\em one partial snapshot} of the network, which is {\em a subset of the infected nodes.} This observation can be sparse, and details will be given in the next section. We assume that the states of other nodes are unknown. We let $Y_v$ denote the state of node $v$ in the snapshot such that
\[
Y_v=\left\{
               \begin{array}{ll}
                 1, & \hbox{if node $v$ is observed to be infected;} \\
                 0,& \hbox{otherwise.}
               \end{array}
             \right.
\] Let ${\bf Y}=\{Y_v:\forall v \in {\cal V}\}.$ We denote by $v^*$ the information source. The problem of information source detection is to locate $v^*$ based on the partial observation ${\bf Y}$ and the network topology $G.$

Due to recovery and partial observations, all nodes in the network are potential candidates of the information source. The maximum likelihood estimator of the problem is therefore computationally expensive to find as pointed out in \cite{ZhuYin_12}. In this paper, we follow the sample path based approach proposed in \cite{ZhuYin_12} to find an estimator of $v^*.$

Since ${\bf X}(t)$ is the state of the network at time $t,$ the sequence $\{{\bf X}(\tau)\}_{0\leq \tau\leq t}$ specifies the complete infection process. Therefore, we call ${\bf X}[0,t]=\{{\bf X}(\tau): 0\leq\tau\leq t\}$ a sample path.  We further define a function $F(\cdot)$ such that
\[
F(X_v(t))=\left\{
               \begin{array}{ll}
                 1, & \hbox{if $X_v(t)=I$ and $v$ is observed;} \\
                 0, & \hbox{otherwise.} \\
               \end{array}
             \right.
\] This function maps the actual state of a node to the observed state of the node.  ${\bf F}({\bf X}(t))={\bf Y}$ if and only if $F(X_v(t))=Y_v, \forall v\in{\cal V}.$
The {\cal optimal sample path} ${\bf X}^*[0,t^*]$ is defined to be the most likely sample path that results in the observed snapshot, i.e., it solves the following optimization problem:
\begin{align}
{\bf X}^*[0,t^*]={\arg\max}_{t,{\bf X}[0,t]\in {\cal X}(t)}\Pr({\bf X}[0,t]),
\end{align}
where ${\cal X}(t)=\{{\bf X}[0,t]|{\bf F}({\bf X}(t))={\bf Y}\}.$ The source associates with ${\bf X}^*[0,t^*]$ is called \emph{the sample path based estimator}. It is proved in \cite{ZhuYin_12} that the sample path based estimator on an infinite tree is a Jordan infection center under the homogeneous SIR model with a complete snapshot.  The focus of this paper is to identify the sample path based estimator under the heterogeneous SIR model with sparse observations.

\section{Main Results}\label{sec:PartialObservation}
In this section, we summarize the main results of this paper.

\subsection{Main result 1: The Jordan infection centers as the sample path based estimators}
In our theoretical analysis, we consider tree networks with infinitely many levels (or called infinite trees) to derive the sample path based estimator under the heterogeneous SIR model with a partial snapshot. Let ${\cal I}_{\bf Y}$ denote the set of observed infected nodes. We define the {\cal observed infection eccentricity} $\tilde{e}(v,{\cal I}_{\bf Y})$ of node $v$ to be the maximum distance between $v$ and any observed infected node where the distance is defined to be the shortest distance between two nodes. The Jordan infection centers of the partial snapshot are then defined to be the nodes with the minimum observed infection eccentricity. The following theorem states that on an infinite tree, the sample path based estimator is a Jordan infection center of the partial snapshot.

\begin{thm}
Consider an infinite tree and assume the partial snapshot ${\bf Y}$ contains at least one infected node. The sample path based estimator, denoted by $v^{\dag},$ is a Jordan infection center, i.e.,
\[
v^{\dag}\in \arg \max_{v\in {\cal V}} \tilde{e}(v,{\cal I}_{{\bf Y}}).
\] \hfill{$\square$}\label{th:estimator}
\end{thm}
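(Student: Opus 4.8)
To prove the theorem I would show that $v^{\dag}$ attains $\min_{v\in{\cal V}}\tilde e(v,{\cal I}_{\bf Y})$, which is the defining property of a Jordan infection center (the statement is phrased with $\arg\max$, but the Jordan center minimizes the observed infection eccentricity). The plan has three steps: first, characterize which sample paths rooted at a candidate source $v$ are consistent with ${\bf Y}$ and reduce the problem to comparing, across sources $v$, the probability $\pi(v)$ of the most likely consistent sample path rooted at $v$; second, show this optimal sample path has duration exactly $\tilde e(v,{\cal I}_{\bf Y})$; third, show by a local rerouting move plus induction on the tree that $\pi$ is maximized exactly at the center of $\tilde e(\cdot,{\cal I}_{\bf Y})$.

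For the first step: on a tree the infection started at $v$ travels to any node $u$ along the unique path from $v$ to $u$ at a rate of at most one hop per slot, so a sample path of duration $t$ rooted at $v$ can have every node of ${\cal I}_{\bf Y}$ infected at time $t$ only if $t\ge d(v,u)$ for all $u\in{\cal I}_{\bf Y}$, i.e. $t\ge\tilde e(v,{\cal I}_{\bf Y})$; conversely, spreading the infection along the Steiner tree of $\{v\}\cup{\cal I}_{\bf Y}$ one hop per slot realizes such a sample path with positive probability for every $t\ge\tilde e(v,{\cal I}_{\bf Y})$. Since every node is a candidate source, the sample path estimator is the root of the globally most likely consistent sample path, so it suffices to compare the $\pi(v)$. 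For the second step: appending an extra slot to a consistent sample path only multiplies in survival factors $(1-p_w)$ and failed-contact factors $(1-q_{uw})$, all at most one, with the survival factor $(1-p_{u^*})$ of a farthest observed node $u^*$ strictly below one; conversely, any consistent sample path of duration $t>\tilde e(v,{\cal I}_{\bf Y})$ can be shortened by one slot without decreasing its probability, since a deepest observed node is infected by time $t-1$ and the infection chain reaching it therefore carries slack that can be pulled forward. Iterating, the optimal sample path rooted at $v$ has duration exactly $\tilde e(v,{\cal I}_{\bf Y})$; I would not need its finer structure.

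For the third step: $\tilde e(\cdot,{\cal I}_{\bf Y})$ is convex along every path of the tree, so if $v_1$ is not a Jordan infection center it has a neighbor $v_2$ with $\tilde e(v_2,{\cal I}_{\bf Y})=\tilde e(v_1,{\cal I}_{\bf Y})-1$, and every observed node at maximum distance from $v_1$ lies strictly on $v_2$'s side of the edge $(v_1,v_2)$; hence in \emph{any} consistent sample path rooted at $v_1$ the node $v_2$ is infected at time exactly $1$. Given the optimal ${\cal P}_1$ rooted at $v_1$ (of duration $\tilde e(v_1,{\cal I}_{\bf Y})$ by the second step), I would build a consistent ${\cal P}_2$ rooted at $v_2$ of duration one less by: making $v_2$ the source, translating the part of ${\cal P}_1$ on $v_2$'s side of the edge one slot earlier (which it absorbs because $v_2$ was infected at time $1$), and, when $v_1\in{\cal I}_{\bf Y}$ or ${\cal I}_{\bf Y}$ has nodes on $v_1$'s side, reinfecting $v_1$ from $v_2$ and replaying a compressed copy of the (necessarily shallow) $v_1$-side history. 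A factor-by-factor comparison then matches every probability factor of ${\cal P}_2$ with a corresponding factor of ${\cal P}_1$, while ${\cal P}_1$ carries extra factors strictly below one (the success factor for the contact $v_1\to v_2$ it must make, and the extra slot each observed node must survive), giving $\pi(v_2)>\pi(v_1)$ \emph{regardless of the values of the $q_{uv}$ and $p_v$} --- which also accounts for the claimed robustness to the diffusion parameters. Repeating the move drives the root to a Jordan infection center, so $v^{\dag}$ is one.

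The main obstacle is the rerouting step: one must verify that moving the root from $v_1$ to $v_2$ never forces more infections, more infectious slots, or more survival slots than ${\cal P}_1$ already pays for, the delicate cases being when ${\cal I}_{\bf Y}$ has nodes strictly on $v_1$'s side (whose depths must be controlled using $\tilde e(v_2,{\cal I}_{\bf Y})=\tilde e(v_1,{\cal I}_{\bf Y})-1$) and when $v_1\in{\cal I}_{\bf Y}$; and that the residual factors of ${\cal P}_1$ remain strictly below one for \emph{every} admissible choice of $\{q_{uv}\}$ and $\{p_v\}$, which is exactly what forces the conclusion to depend on the topology alone.
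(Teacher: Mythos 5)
Your proposal is correct and follows essentially the same route as the paper: your step two is the paper's Time Inequality lemma (the optimal sample path rooted at $v$ has duration exactly $\tilde e(v,{\cal I}_{\bf Y})$, proved by shifting events one slot earlier), your step three is the paper's Adjacent Nodes Inequality (re-rooting at the neighbor with smaller observed eccentricity, translating the far side one slot earlier and compressing the near side), and the final iteration along a strictly eccentricity-decreasing path to a Jordan infection center is exactly the paper's concluding argument. You also correctly note that the $\arg\max$ in the statement should be $\arg\min$.
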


The proof of this theorem consists of the following key steps.

1) In the first step, we focus on the sample paths originated from node $v$ (i.e., we assume node $v$ is the source). We consider two groups of sample paths: ${\cal X}_v(t)$ and ${\cal X}_v(t+1),$ where ${\cal X}_v(t)$ is the set of the sample paths  that are {\em originated from $v,$ have time duration $t,$} and {\em are consistent with the partial snapshot}, i.e., ${\bf F}({\bf X}(t))={\bf Y}$ for any ${\bf X}[0, t]\in{\cal X}_v(t).$ The set ${\cal X}_v(t+1)$ is similarly defined. We show that for any $t\geq \tilde{e}(v,{\cal I}_{{\bf Y}}),$ the sample path with the highest probability in ${\cal X}_v(t)$ occurs more likely than the one in ${\cal X}_v(t+1).$ In other words, $$\max_{{\bf X}[0,t]\in {\cal X}_v(t)} \Pr({\bf X}[0,t])> \max_{{\bf X}[0,t+1] \in {\cal X}_v(t+1)} \Pr({\bf X}[0,t+1]).$$ As a consequence of this result, we conclude that {\em the sample path that has the highest probability among those originated from node $v$ has a duration of $\tilde{e}(v,{\cal I}_{{\bf Y}})$ (the observed infection eccentricity of node $v$)}. This result will be proved in Lemma \ref{lem:timeinequalityp} in Section \ref{sec:proofs}.

2) In the second step, we consider two neighboring nodes, say nodes $u$ and $v;$ and assume node $v$ has a smaller observed infection eccentricity than node $u.$ Based on Lemma \ref{lem:timeinequalityp}, we will prove that the optimal sample path associated with node $v$ occurs with a higher probability than that of node $u.$ The key idea is to construct a sample path originated from node $v$ based on the optimal sample path originated from node $u$ and show that it occurs with a higher probability. This result will be proved in Lemma \ref{lem:mainlemmap} in Section \ref{sec:proofs}.

3) We will finally prove that starting from any node, there exists a path from the node to a Jordan infection center such that the observed infection eccentricity strictly decreases along the path. Consider an example in Figure \ref{fig: intuition-estimator}. Nodes $b$ and $f$ are two observed infected nodes. So node $a$ is a Jordan infection center with observed infection eccentricity $1.$ The path from node $e$ to node $a$ is $$e\rightarrow d \rightarrow c \rightarrow b \rightarrow a,$$ along which the observed infection eccentricity decreases as $$5\rightarrow 4\rightarrow 3\rightarrow 2\rightarrow 1.$$ By repeatedly using Lemma \ref{lem:mainlemmap}, it can be shown that the optimal sample path originated from a Jordan infection center occurs with a higher probability than the optimal sample path originated from a node which is not a Jordan infection center, which implies the sample path based estimator must be a Jordan infection center.

\begin{figure}[htb]
\begin{centering}
  \includegraphics[width=0.7\columnwidth]{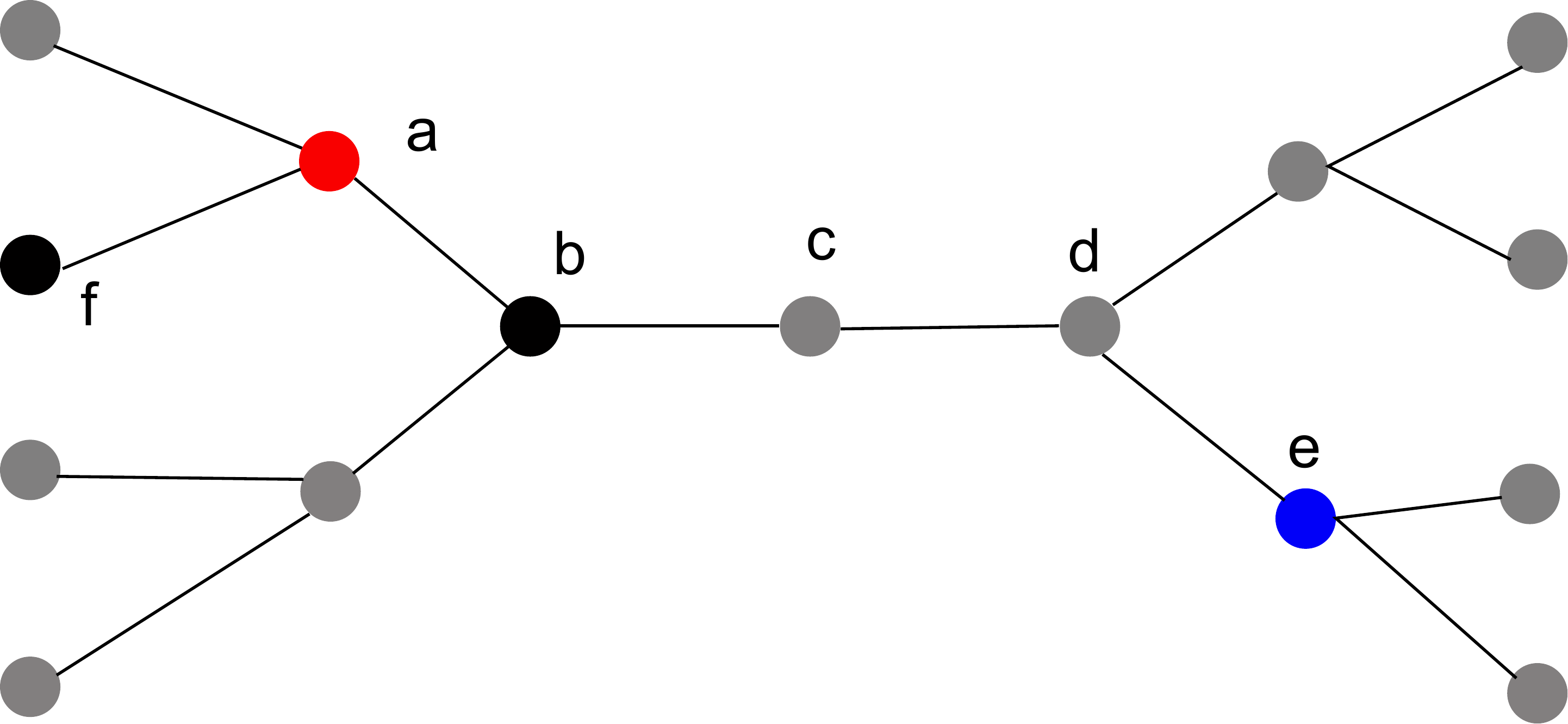}
  \caption{The key intuition behind Theorem \ref{th:estimator}}\label{fig: intuition-estimator}
  \end{centering}
\end{figure}

\subsection{Main result 2: An $O(1)$ bound on the distance between a Jordan infection center and the actual information source}
Unlike the maximum likelihood estimator, the sample path estimator is not guaranteed that the estimator is the node that most likely leads to the observation. It has been shown in \cite{ZhuYin_12} that on tree networks and under the homogeneous SIR model, the distance between the estimator and the actual source is a constant with a high probability.  It is easy to see that with a partial observation, the distance between the estimator and the actual source cannot be bounded if the observed infection nodes are arbitrarily chosen. In this paper, we consider a class of fairly general sampling algorithms that generate the partial observation  (and maybe sparse). The sampling algorithms have the following property: {\em for any set of $M$ infected nodes, the probability that at least one node in the set is reported approaches to one as $M$ goes to infinity.} We call such a sampling algorithm {\em unbiased,} in other words, any subset of infected nodes is likely to contain an observed infected node when the size of the subset is large enough. Note that if an infected node is reported with probability at least $\delta$ for some $\delta>0,$ independent of other nodes, then it satisfies the property above.  Our second main result is that the sample path estimator is within a constant distance from the actual source independent of the size of the infected subnetwork if the sampling algorithm is unbiased. {\em We also emphasize that the observation generated by an unbiased sampling algorithm can be very sparse since we only require one observed infected node is reported with a high probability among $M$ nodes when $M$ is sufficiently large.} 

\begin{thm}\label{th:perfromanceguarantee}
Consider an infinite tree. Let $g_{\min}$ be the lower bound on the number of children, and $q_{\min}>0$ be the lower bound on $q.$  Assume $g_{\min}>1,$ $g_{\min} q_{\min}>1,$ and the
observed infection topology ${\bf Y}$ contains at least one infected node and is generated by an unbiased sampling algorithm.  Then given $\epsilon>0$, the distance between the sample path estimator and the actual source is $d_\epsilon$ with
probability $1-\epsilon,$ where $d_{\epsilon}$ is independent of the size of the infected subnetwork. In other words, the
distance is $O(1)$ with a high probability. \hfill{$\square$}.
\end{thm}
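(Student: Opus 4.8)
The plan is to combine the structural statement of Theorem~\ref{th:estimator} with large‑deviation estimates for the heterogeneous SIR diffusion, which under the hypotheses $g_{\min}>1$ and $g_{\min}q_{\min}>1$ behaves like a supercritical branching process even in the worst case (infectious window of length one at every node, every infection probability equal to $q_{\min}$). By Theorem~\ref{th:estimator} the estimator $v^{\dag}$ is a Jordan infection center, so it suffices to bound the distance from $v^{*}$ to an arbitrary node of minimum observed infection eccentricity. Throughout I will split the failure budget $\epsilon$ among a constant number of events and finish with a union bound; the resulting $d_{\epsilon}$ is the maximum of several quantities of order $\log(1/\epsilon)$, each depending only on $\epsilon$, on $g_{\min}$, $q_{\min}$ and on the sampling algorithm, and in particular not on the observation time $t^{*}$ nor on the size $n$ of the infected subnetwork. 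Two easy regimes are disposed of first: if $t^{*}\le T_{\epsilon}$ for a constant $T_{\epsilon}$ fixed later, then every observed infected node is within $t^{*}\le T_{\epsilon}$ of $v^{*}$, and by the midpoint inequality any Jordan center is within $\lceil D/2\rceil$ of every observed infected node, where $D\le 2T_{\epsilon}$ is the diameter of $\mathcal{I}_{\mathbf{Y}}$, so $\mathrm{dist}(v^{\dag},v^{*})\le 2T_{\epsilon}$; and if the diffusion goes extinct, the infected subnetwork is a finite subtree whose radius $E$ around $v^{*}$ has an exponential tail (a realization reaching depth $r$ before dying is stochastically dominated by a subcritical branching process), so the same argument gives $\mathrm{dist}(v^{\dag},v^{*})\le 2E=O(\log(1/\epsilon))$ with probability $1-\epsilon$.

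It remains to treat the survival event with $t^{*}>T_{\epsilon}$. Here the infected subnetwork contains a semi‑infinite backbone rooted at $v^{*}$, namely the set of infected nodes having infinitely many infected descendants. Let $z$ be the first backbone node with at least two backbone‑children; such a node exists, and $K:=\mathrm{dist}(v^{*},z)$ satisfies $\Pr(K\ge k)\le\alpha^{k}$ for some $\alpha<1$, because the reduced tree of a supercritical branching process has geometrically bounded runs of single‑child vertices. Let $\tau$ be the slot at which $z$ is infected; the transmission delay across each of the $K$ backbone edges is geometrically bounded with parameter controlled by $q_{\min}$, uniformly in the (possibly unbounded) recovery probabilities, since slower recovery only hastens transmission, so $\tau-K$, hence $\tau$, is $O(\log(1/\epsilon))$ with probability $1-\epsilon$. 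The key geometric observation is: if $u_{1}$ and $u_{2}$ are observed infected nodes lying in two of the subtrees hanging off $z$, then for any node $v$ at least one of the paths $v\!-\!u_{1}$, $v\!-\!u_{2}$ passes through $z$, whence $\mathrm{dist}(v,z)\le\mathrm{dist}(v,u_{i})-\mathrm{dist}(z,u_{i})\le\tilde{e}(v,\mathcal{I}_{\mathbf{Y}})-\mathrm{dist}(z,u_{i})$ for that index $i$; taking $v=v^{\dag}$ and using $\tilde{e}(v^{\dag},\mathcal{I}_{\mathbf{Y}})\le\tilde{e}(v^{*},\mathcal{I}_{\mathbf{Y}})\le t^{*}$ gives $\mathrm{dist}(v^{\dag},v^{*})\le K+\bigl(t^{*}-\mathrm{dist}(z,u_{i})\bigr)$. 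Thus the whole theorem reduces to exhibiting, with probability $1-\epsilon$, observed infected nodes $u_{1},u_{2}$ in two distinct subtrees off $z$ with $\mathrm{dist}(z,u_{i})\ge t^{*}-O(\log(1/\epsilon))$.

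To produce such $u_{1},u_{2}$ I would show, for each of the two backbone subtrees hanging off $z$ and with probability $1-\epsilon$: (a) the diffusion front in that subtree reaches depth at least $(t^{*}-\tau)-c$ from $z$, where the gap $c$ between elapsed time and front depth has an exponential tail — once the number of frontier nodes exceeds a threshold, which happens after a geometrically bounded ``escape'' period since the frontier population dominates a supercritical branching process of mean $g_{\min}q_{\min}$, the front advances every slot with overwhelming probability; and (b) that subtree contains at least $M_{\epsilon}$ infected nodes within distance $c$ of the front, so by the unbiasedness of the sampling algorithm at least one of them is reported, where $M_{\epsilon}$ is chosen so that the probability of reporting among any $M_{\epsilon}$ infected nodes exceeds $1-\epsilon$; here $M_{\epsilon}$, and through it $T_{\epsilon}$, depend only on $\epsilon$ and the sampler. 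Combining the bounds on $K$, on $\tau$, on the two front depths, and on the two reporting events — a union bound over a constant number of failure events of probability $\epsilon/O(1)$ each — with the geometric inequality of the previous paragraph yields the claimed $O(1)$ bound on $\mathrm{dist}(v^{\dag},v^{*})$.

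I expect the main obstacle to be step (a): proving that the diffusion front on a heterogeneous supercritical tree advances with asymptotic speed one, i.e. that the front‑depth gap is tight uniformly over all admissible infection and recovery parameters and uniformly in $t^{*}$, together with the exponential growth of the frontier population that makes late ``stuck'' slots exponentially unlikely. The tools are stochastic domination by a homogeneous supercritical branching process of offspring mean $g_{\min}q_{\min}$, together with a survival‑probability (or second‑moment) estimate for the frontier size, but the heterogeneity and the conditioning on non‑extinction require care. Step (b), the backbone delay bound, and the reduced‑tree bound on $K$ are comparatively routine once the right couplings are in place.
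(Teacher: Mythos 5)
Your high-level architecture matches the paper's: reduce to bounding the distance from $v^*$ to a Jordan infection center via Theorem~\ref{th:estimator}, exhibit two observed infected nodes in distinct subtrees near the source at depth close to the elapsed time $t$, and then use the eccentricity/triangle inequality $\tilde{e}(v^{\dag},{\cal I}_{\bf Y})\leq \tilde{e}(v^{*},{\cal I}_{\bf Y})\leq t$ to force $v^{\dag}$ close to the branching point. Your geometric inequality through $z$ is essentially the paper's computation with $\tilde v$ and $v'$ in Figure~\ref{figure:PerformanceDistance}, and your use of unbiased sampling in step~(b) parallels Lemma~\ref{lem:NodeNumberIncrease}.

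However, there is a genuine gap exactly where you flag it: step~(a), the claim that the diffusion front advances at speed one with a gap $c=t^*-\mathrm{depth}$ having an exponential tail \emph{uniformly in $t^*$} and over all heterogeneous parameters. Your sketch ("once the frontier population exceeds a threshold, the front advances every slot with overwhelming probability") conflates the total infected population near the front with the population \emph{at maximal depth}. The set of maximal-depth nodes is not itself a branching process: it can stall, shrink, and be replenished from behind, and when it is small the per-slot stall probability is bounded away from zero, so controlling the accumulated gap over $t^*$ slots requires showing the maximal-depth population grows -- which is precisely the statement you are trying to prove. The paper sidesteps this entirely with the \emph{one-time-slot infection subtree}: the set of nodes infected exactly one slot after their parents forms a genuine branching process with offspring mean at least $g_{\min}q_{\min}>1$ (Lemma~\ref{lem:2survive} couples it against $B(g_{\min},q_{\min})$ via a virtual-source construction), and \emph{conditioned on survival its depth equals $t-t^I_k$ deterministically} -- no front-speed estimate is needed. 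The remaining issue, that a single such process survives only with probability $1-\rho<1$ rather than $1-\epsilon$, is handled by the $\tau$-deadline process and the level $l^{\dag}$ at which at least $n_0$ independent restart points exist (Lemma~\ref{lem:morethann}), boosting the survival probability of at least two of them to $1-\epsilon_1$. Your proposal has no analogue of this boosting step either: conditioning on global survival of the diffusion does not by itself make the two subtrees off $z$ each contain a speed-one lineage with probability $1-\epsilon$. To close the gap you would either have to prove the uniform front-tightness estimate from scratch (hard, and delicate under recovery and heterogeneity) or adopt the paper's deterministic-depth subtree device, at which point the two proofs coincide.
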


\begin{figure}[htb]
\begin{centering}
  \includegraphics[width=0.99\columnwidth]{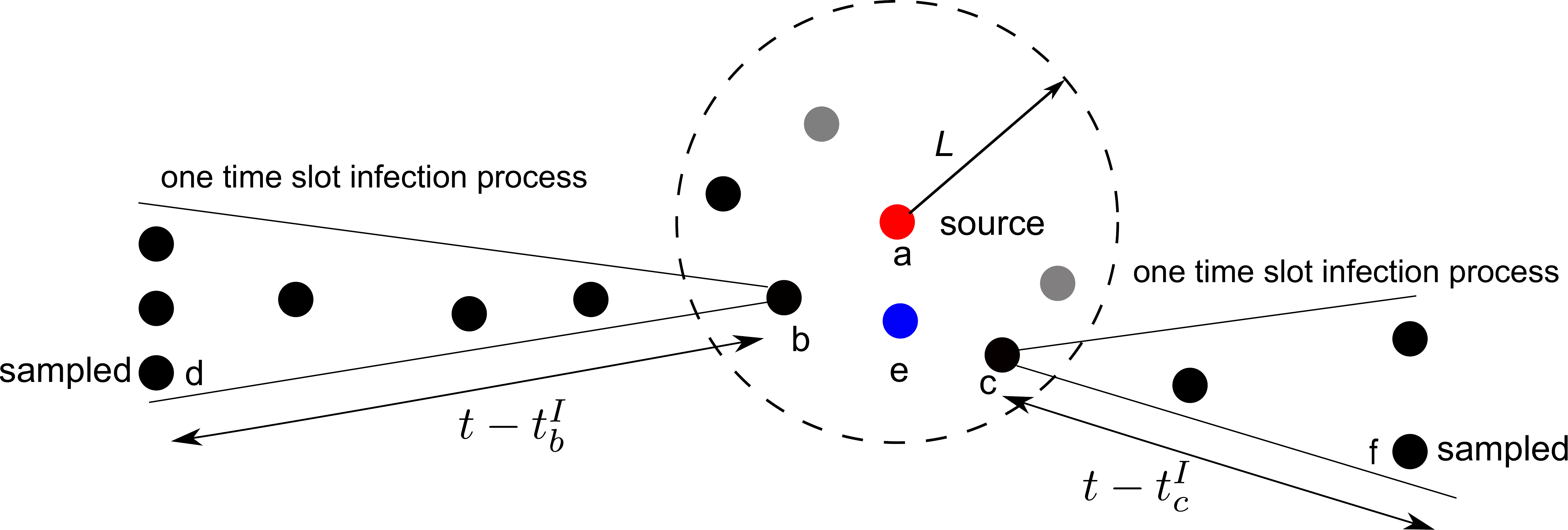}
  \caption{The key intuition behind Theorem \ref{th:perfromanceguarantee}}\label{fig: intuition-performance}
  \end{centering}
\end{figure}

The idea of the proof is illustrated using Figure \ref{fig: intuition-performance}, which consists of the following key steps:

 1) We first define a one-time-slot infection subtree to be a subtree of the infected subnetwork such that each node on the subtree is infected in the next time slot after parent is infected, except the source node. Note that the depth of a one-time-slot infection subtree grows by one deterministically until it terminates. We further say a node survives at time $t$ if it is the root of a one-time-slot infection subtree which has not terminated by time $t.$

2) In the first step, we will prove that there exist at least two survived nodes within a distance $L$ from the information source. In Figure \ref{fig: intuition-performance}, node $a$ is the information source, and nodes $b$ and $c$ are two survived nodes.

3) In the second step, we will show that with a high probability, at least one infected node at the bottom of a one-time-slot infection subtree, which has not terminated, is  observed under an unbiased sampling algorithm. In Figure \ref{fig: intuition-performance}, nodes $d$ and $f$ are two sampled nodes corresponding to the two one-time-slot infection subtrees starting from  nodes $b$ and $c,$ respectively.

4) Since a one-time-slot infection subtree grows by one deterministically at each time slot, the depth of a one-time-slot infection subtree is $t-t_k^I,$ where $k$ is the root node of the one-time-slot infection subtree. Recall that the Jordan infection centers minimize the maximum distance to observed infected nodes, so a Jordan infection center must be within a $O(1)$ distance from the two survived nodes (nodes $b$ and $c$). Considering Figure \ref{fig: intuition-performance}, we know that the actual source (node $a$) has an infection eccentricity $\leq t$ since the information can propagate at most $t$ hops at time $t.$ So the infection eccentricity of the Jordan infection centers is no more than $t$ according to the definition. Assume node $e$ in Figure \ref{fig: intuition-performance} is a Jordan infection center, then it is within a distance of $O(t)$ from nodes $d$ and $f,$ so is within a distance of $O(1)$ from nodes $b$ and $c.$  Since nodes $b$ and $c$ are no more than $L$ hops from the actual source $a,$ we can conclude that the distance between the actual source $a$ and the estimator $e$ is $O(1).$

\subsection{Reverse Infection Algorithm}
The Jordan infection centers for general graphs can be identified by the reverse infection algorithm proposed in \cite{ZhuYin_12}. In the algorithm, each observed infected node broadcasts its identity (ID) to its neighbors. All nodes in the network record the distinct IDs they received. When a node receives a new distinct ID, it records it and then broadcasts it to its neighbors. This process stops when there is a node who receives the IDs from all observed infected nodes. It is easy to verify the set of nodes who first receive all infected IDs is the set of Jordan infection centers. When there are multiple Jordan infection centers in the graph, we select the one with the maximum infection closeness centrality as the information center. The infection closeness centrality is defined as the inverse of the sum of the distances from one node to all observed infected nodes.

\subsection{Discussion: Robustness}
According to the two main results above, we know that the sample path based estimator remains to be a Jordan infection center. This is a somewhat surprising result since the locations of the Jordan infection centers are determined by the topology of the network, and are {\em independent of the parameters of the heterogeneous SIR model.} In other words, the locations of the Jordan infection centers remain the same for different SIR processes as long as the set of observed infected nodes is the same. This property suggests that the sample path based estimator is a robust estimator and can be used in the case when the parameters of the SIR model are unknown, which is a very desirable property since knowing these parameters can be difficult in practice.

In the simulations, we also consider a weighted graph with the link weights chosen proportionally according to the SIR parameters and use the weighted Jordan infection centers as the estimator. Interestingly, we will see that the performance is worse than the unweighted Jordan infection centers, which again demonstrates the robustness of the sample path based estimator. 

Furthermore, the main results hold as long as the sampling algorithm is unbiased and are independent of the number of samples. So the results are valid for sparse observations and are robust to the number of observations.

\section{Simulations}
\begin{figure*}[t!]
        \centering
        \begin{subfigure}[b]{0.34\textwidth}
                \centering
                \includegraphics[width=\textwidth]{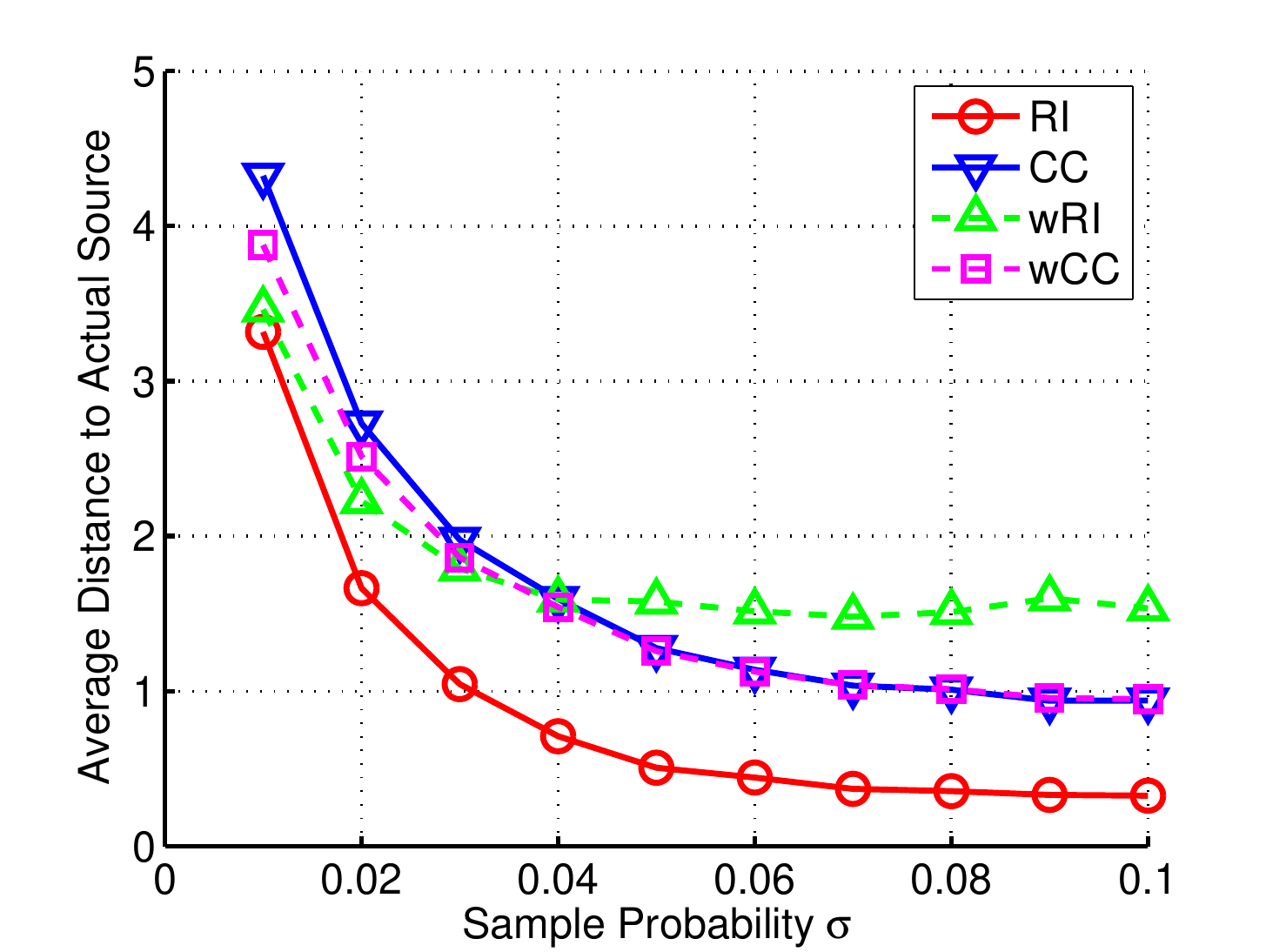}
                \caption{Regular Tree}
                \label{figure:RegularTree}
        \end{subfigure}
~
        \begin{subfigure}[b]{0.34\textwidth}
                \centering
                \includegraphics[width=\textwidth]{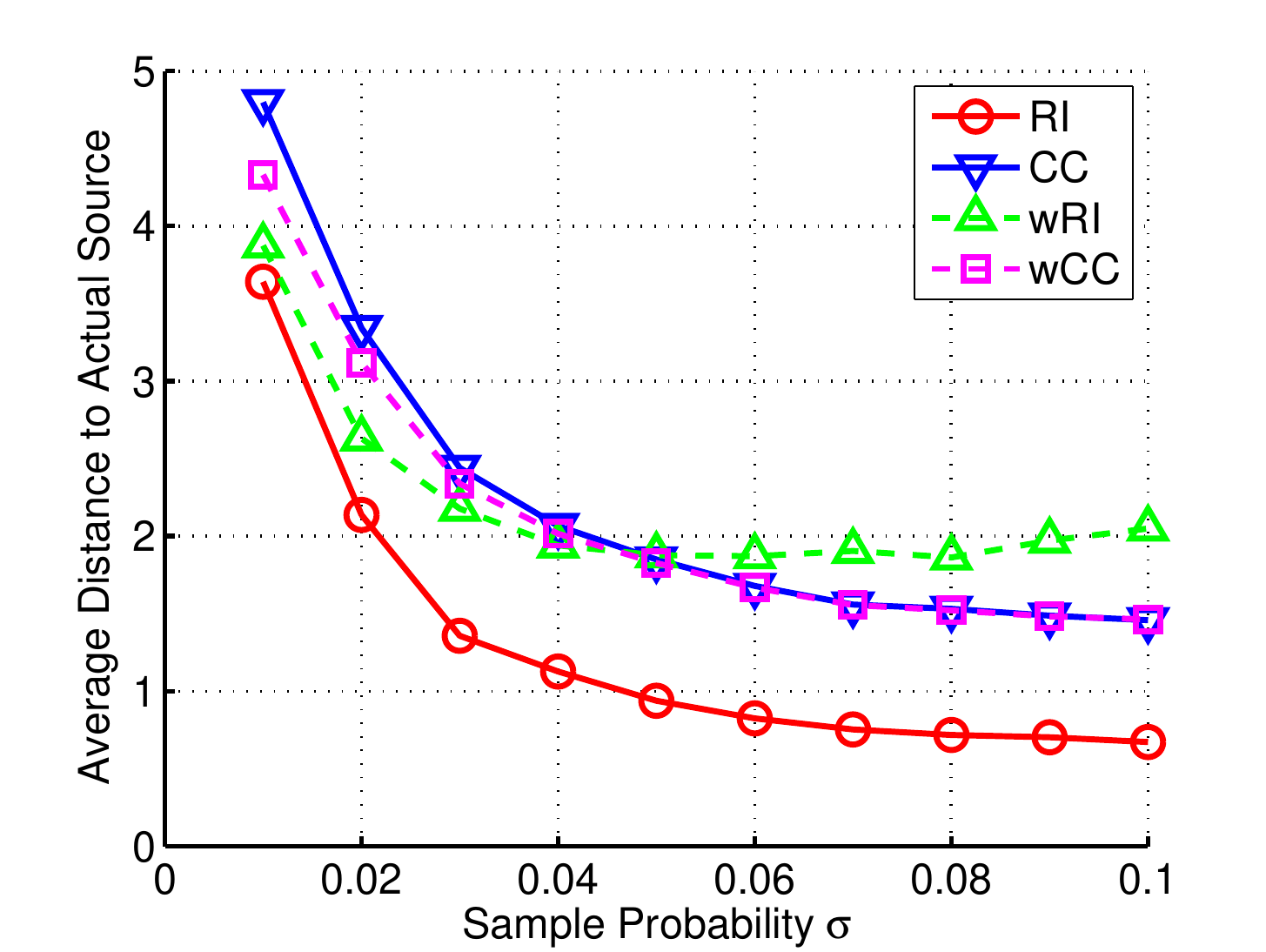}
                \caption{Binomial Tree}
                \label{figure:BinomialTree}
        \end{subfigure}\\
        \centering
        \begin{subfigure}[b]{0.34\textwidth}
                \centering
                \includegraphics[width=\textwidth]{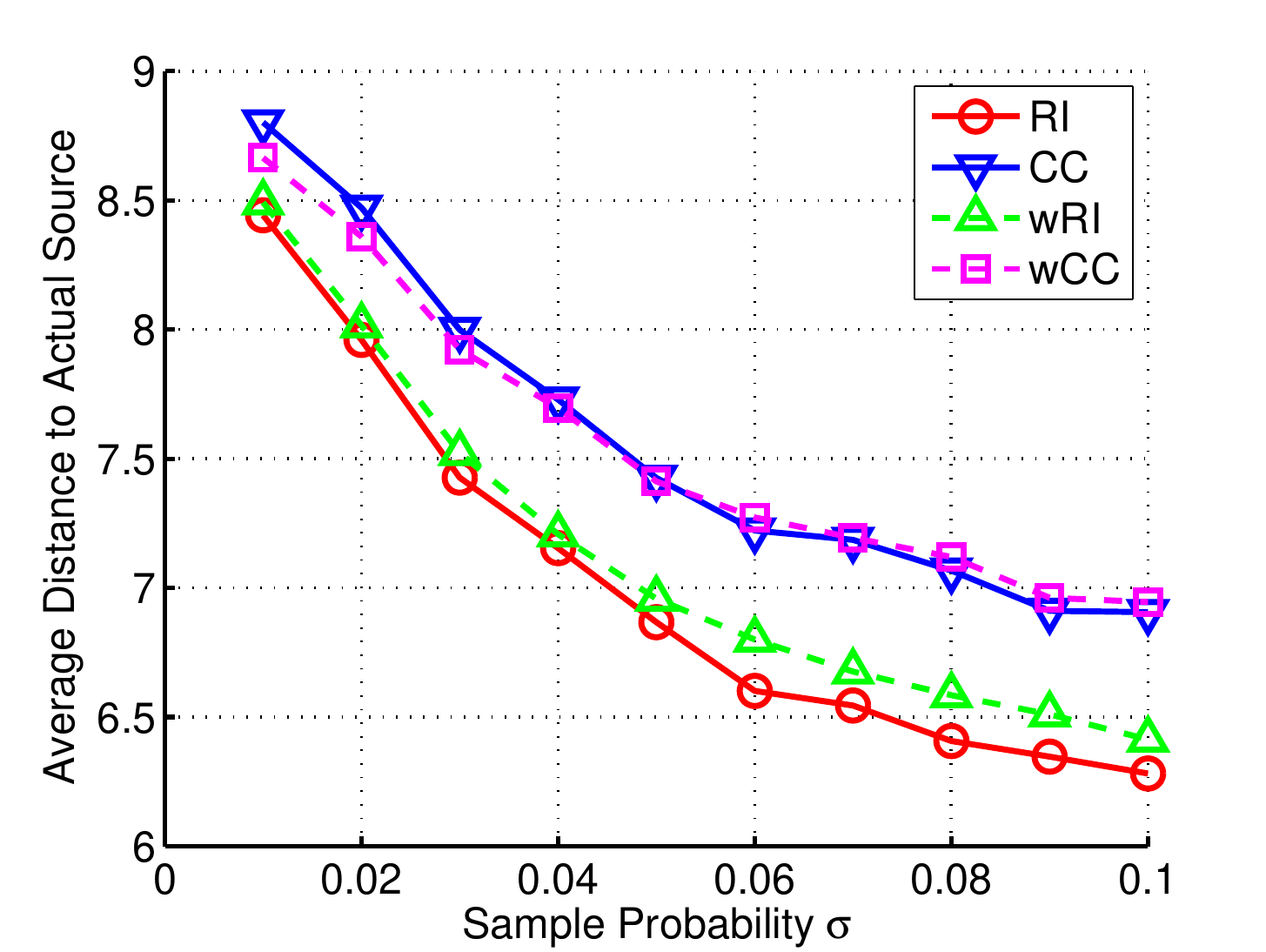}
                \caption{The Power Grid Network}
                \label{figure:PowerNetwork}
        \end{subfigure}
~
        \begin{subfigure}[b]{0.34\textwidth}
                \centering
                \includegraphics[width=\textwidth]{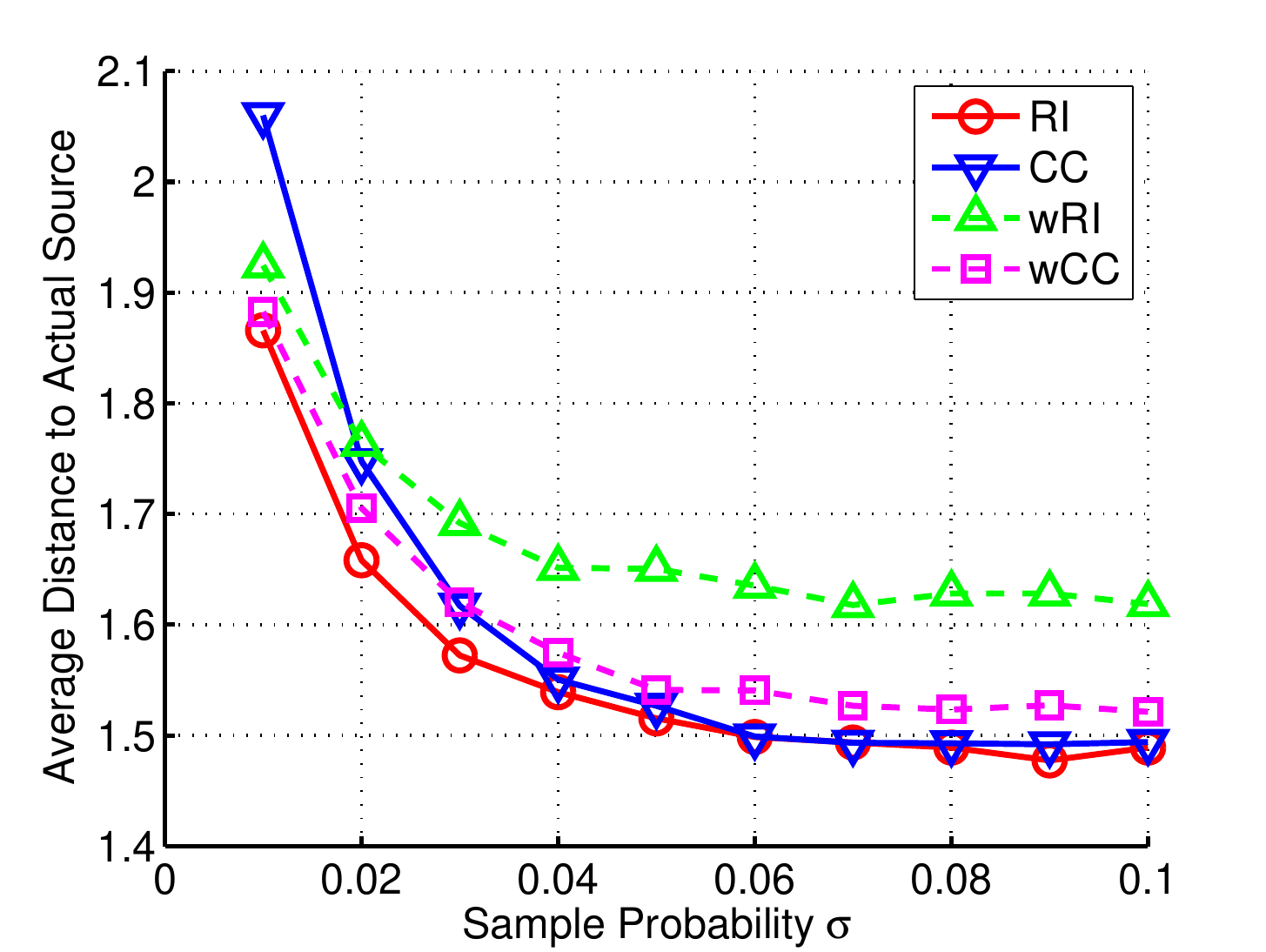}
                \caption{The Internet Autonomous Systems Network}
                \label{figure:InternetAutonomous}
        \end{subfigure}        
        \centering
\caption{\textcolor{black}{The Performance of RI, CC, wRI and wCC on Different Graphs}}
\end{figure*}

In this section, we evaluate the performance of the reverse infection algorithm for the heterogeneous SIR model on different networks including tree networks and real world networks. 

We first describe the heterogeneous SIR model we used in the simulation. Each edge $e\in\cal{E}$ is assigned with a weight $q_e$ which is uniformly distributed over $(0,1).$ The infection time over each edge $e\in\cal{E}$ is geometrically distributed with mean $1/q_e.$ Similarly, each node $v\in\cal{V}$ is assigned with a weight $p_v$ generated by an uniform distribution over $(0,1)$ and the recovery time is geometrically distributed with mean $1/p_v.$ The information source is randomly selected. The total number of infected and recovered nodes in each infection graph is within the range of $[100,300].$ Each infected node $v$ in the infection graph reports with probability $\sigma,$ independently. The snapshots used in the simulations have at least one infected node. We changed $\sigma$ and evaluated the performance on different networks.

We briefly introduce the three main algorithms which were used to compare with the reverse infection algorithm (RI).

1) \emph{Closeness centrality algorithm (CC)}: The closeness centrality algorithm selects the node with the maximum infection closeness as the information source.

2)  \emph{Weighted reverse infection algorithm (wRI)}: The weighted reverse infection algorithm selects the node with the minimum weighted infection eccentricity as the information source where the weighted infection eccentricity is similar to the infection eccentricity except that the length of a path is defined to be  the sum of the link weights instead of the number of hops, and the link weight is the average time it takes to spread the information over the link, i.e., $\left\lfloor 1/q_e\right\rfloor$ on edge $e.$ 

3) \emph{Weighted closeness centrality algorithm (wCC)}: The weighted closeness centrality algorithm selects the node with the maximum weighted infection closeness as the information source.

\subsection{Tree Networks}
We first evaluated the performance of the RI algorithm on tree networks.
\subsubsection{Regular Trees}
A $g$-regular tree is a tree where each node has $g$ neighbors. We set the degree $g=5$ in our simulations.

We varied the sample probability $\sigma$ from $0.01$ to $0.1.$ The simulation results are summarized in Figure \ref{figure:RegularTree}, which shows the average distance between the estimator and the actual information source versus the sampling probability.  \textcolor{black}{When the sample probability increases, the performance of all algorithms improve. When the sample probability is larger than $6\%,$ the average distance becomes stable which means a small number of infected nodes is enough to obtain a good estimator. We also notice that the average distance of RI is smaller than all other algorithms and is less than one hop when $\sigma\geq 0.04.$ wRI has a similar performance with RI when the sample probability is small (=0.01) but becomes much worse when the sample probability increases.}

\subsubsection{Binomial Trees}
We further evaluated the performance of RI and other algorithms on binomial trees $T(\xi,\beta)$ where the number of children of each node follows a binomial distribution such that $\xi$ is the number of trials and $\beta$ is the success probability of each trial. In the simulations, we selected $\xi=10$ and $\beta=0.4.$ Again, we varied $\sigma$ from $0.01$ to $0.1.$ The results are shown in Figure \ref{figure:BinomialTree}. \textcolor{black}{Similar to the regular trees, the performance of RI dominates CC, wRI and wCC, and the difference in terms of the average number of hops is approximately one when $\sigma\geq 0.03.$ }

\subsection{Real World Networks}
In this section, we conducted experiments on two real world networks: the Internet Autonomous Systems network (IAS)\footnote{Available at
\url{http://snap.stanford.edu/data/index.html}}, and the power grid network (PG)\footnote{Available at
\url{http://www-personal.umich.edu/~mejn/netdata/}}. 

\subsubsection{The Power Grid Network}
The power grid network has 4,941 nodes and 6,594 edges. On average, each node has  1.33 edges. So the power grid network is a sparse network. The simulation results are shown in Figure \ref{figure:PowerNetwork}. In the power grid network, we can see that RI and wRI have similar performance, and both outperform CC and wCC by at least one hop when $\sigma\geq 0.04.$

\subsubsection{The Internet Autonomous Systems Network}
The Internet Autonomous Systems network is the data collected on March, 31st, 2001. There are 10,670 nodes and 22,002 edges in the network. The simulation results are shown in Figure \ref{figure:InternetAutonomous}. \textcolor{black}{ wRI and wCC  always perform worse than RI. Although RI and CC have similar performance when the sample probability is large, RI outperforms CC when $\sigma\leq 0.03$. }

\subsection{RI versus DMP}

We finally compared the performance of RI and DMP with sparse observations. We conducted the simulation on the power grid network and fixed the sample probability to be $10\%.$  Under this setting, the complexity of DMP is very high since the DMP computation needs to be repeated for every node in the network. Since nodes far away from the observed infected nodes are not likely to be the information source, we ran DMP over a subset of nodes close to the Jordan infection centers to reduce the complexity of the algorithm.

We tested the speed of RI and DMP on a machine with 1.8 GB memory, 4 cores 2.4 GHz Intel i5 CPU and Ubuntu 12.10. \textcolor{black}{The algorithms are implemented in Python 2.7.} On average, it took RI 0.57 seconds to locate the estimator for one snapshot and took DMP 229.12 seconds. So RI is much faster than DMP.

 Figure \ref{fig: RI2DMP} shows the CDF of the distance from the estimator to the actual source under DMP and RI. We can see that RI dominates DMP, in particular, $71\%$ of the estimators under RI are no more than $7$ hops from the actual source comparing to $57 \%$ under DMP. Therefore, RI outperforms DMP in terms of both speed and accuracy. We remark that we did not compare the performance of RI and DMP on the Internet Autonomous System (IAS) network because the complexity of running DMP on a large size network like the IAS network is prohibitively high.

\begin{figure}[htb]
\begin{centering}
  \includegraphics[width=0.8\columnwidth]{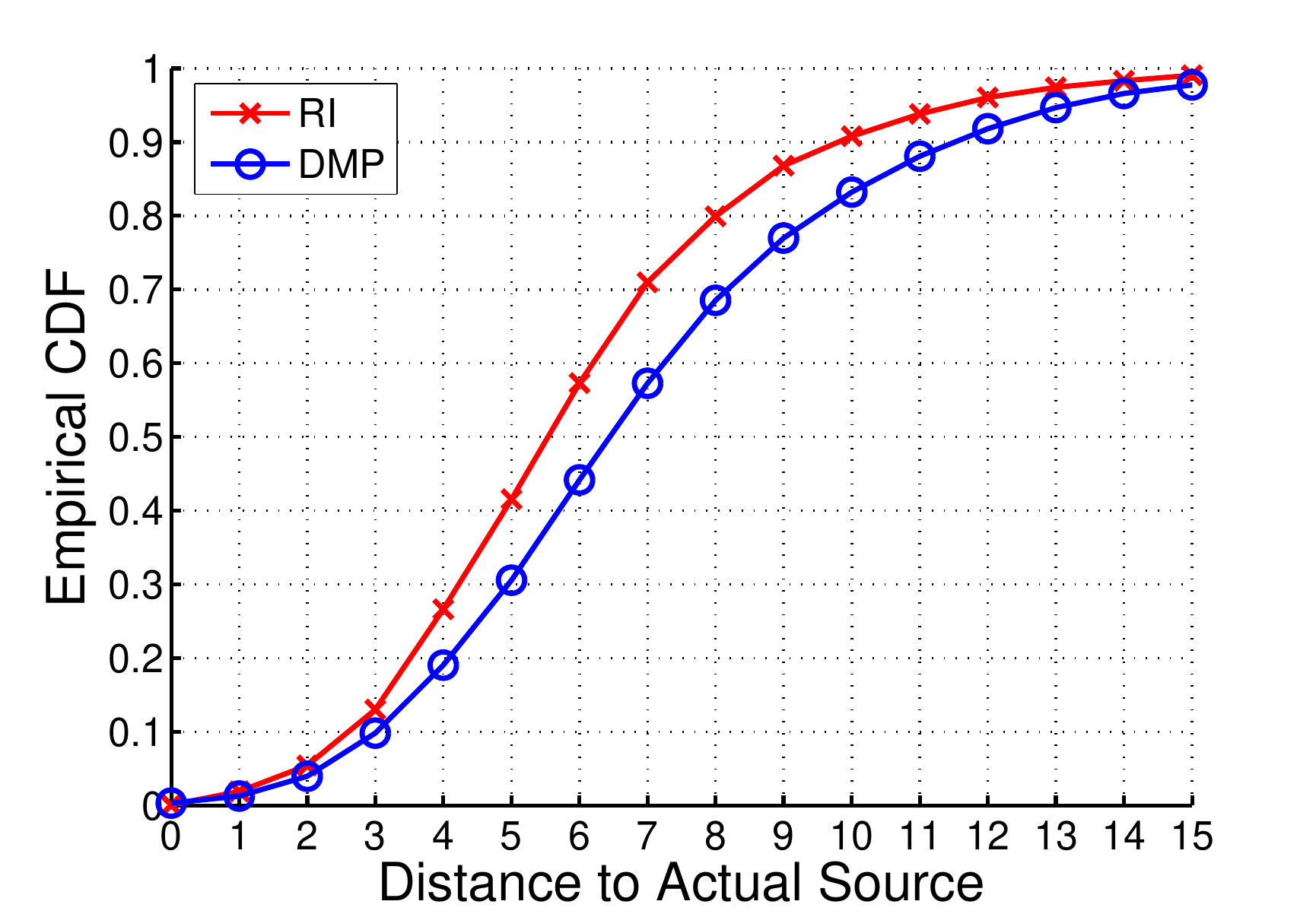}
  \caption{The CDF of RI and DMP on the Power Grid Network}\label{fig: RI2DMP}
  \end{centering}
\end{figure}

\section{Proofs}\label{sec:proofs}
In this section, we present the proofs of the main results.
\subsection{Proof of Theorem \ref{th:estimator}}
Denote by ${\cal I}_{{\bf Y}}=\{v|Y_v=1\}$ the set of observed infected
nodes and ${\cal H}_{{\bf Y}}=\{v|Y_v=0\}$ the set of unobserved nodes.
Given a node $v,$ define the optimal time $t^*_{v}$ to be
\[
t^*_{v}\triangleq\arg_t\max_{t,
{\bf X}[0,t]\in {\cal X}(t)}\Pr\left({\bf X}[0,t]|v \hbox{ is information source}\right),
\]
i.e., it is the duration of the optimal sample path with node $v$ as the information source.

\begin{lemma}{\bf (Time Inequality) }\label{lem:timeinequalityp}
Consider an infinite tree rooted at $v_r.$ Assume that $v_r$ is the information source and the observed
snapshot ${\bf Y}$ contains at least one infected node. If  $\tilde{e}(v_r,{\cal I}_{{\bf Y}})\leq t_1<t_2$, the following inequality holds,
$$\max_{{\bf X}[0,t_1]\in \tilde{{\cal X}}(t_1)} \Pr({\bf X}[0,t_1]) > \max_{{\bf X}[0,t_2] \in \tilde{{\cal X}}(t_2)} \Pr({\bf X}[0,t_2]),$$
where $\tilde{{\cal X}}(t)=\{{\bf X}[0,t]|{\bf Y}= {\bf F}({\bf X}(t))\}.$
In addition,
$$t_{v_r}^*=\tilde{e}(v_r,{\cal I}_{{\bf Y}})=\max_{u\in{\cal I}_{\bf Y}}d(v_r,u),$$
i.e., $t_{v_r}^*$ is equal to the observed infection eccentricity of $v_r$
with respect to ${\cal I}_{{\bf Y}}.$
\end{lemma}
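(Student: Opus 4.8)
The plan is to reduce to the one‑slot case $t_2=t_1+1$ and to build an explicit ``time‑compression'' map between sample paths. Granting the strict inequality for consecutive times, the general case $\tilde{e}(v_r,{\cal I}_{{\bf Y}})\le t_1<t_2$ follows by chaining it over $t_1,t_1+1,\dots,t_2$ (each intermediate set $\tilde{{\cal X}}(t)$ with $t\ge\tilde{e}(v_r,{\cal I}_{{\bf Y}})$ is nonempty, see below), and the ``in addition'' claim then follows by combining this monotonicity with two elementary observations: (i) $\tilde{{\cal X}}(t)=\emptyset$ for $t<\tilde{e}(v_r,{\cal I}_{{\bf Y}})$, since a node $d$ hops from $v_r$ cannot be infected before time $d$, so the farthest observed infected node is still susceptible at time $t$; and (ii) for every $t\ge\tilde{e}(v_r,{\cal I}_{{\bf Y}})$ the set $\tilde{{\cal X}}(t)$ contains a \emph{positive‑probability} path — let $v_r$ infect outward along the shortest paths to the observed infected nodes and let nobody recover; this is consistent, and the last observed node is reached by time $\tilde{e}(v_r,{\cal I}_{{\bf Y}})\le t$. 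Hence $t\mapsto\max_{{\bf X}[0,t]\in\tilde{{\cal X}}(t)}\Pr({\bf X}[0,t])$ is positive and strictly decreasing on $t\ge\tilde{e}(v_r,{\cal I}_{{\bf Y}})$, so $t^*_{v_r}=\tilde{e}(v_r,{\cal I}_{{\bf Y}})=\max_{u\in{\cal I}_{{\bf Y}}}d(v_r,u)$. I use throughout that $0<q_{uv}<1$ and $0<p_v<1$, which makes every consistent path positive‑probability and every ``failure'' factor below strictly less than one.

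For the one‑slot case I encode a sample path on the tree by each node's infection time $t^I_v$ and recovery time $t^R_v$ (set to $\infty$ when the event does not occur within the horizon), with $t^I_{v_r}=0$. On a tree the only possible infector of $v$ is its parent $u$ (the neighbor toward $v_r$), so $\Pr({\bf X}[0,t])$ factorizes into a product over edges — a factor $(1-q_{uv})^{t^I_v-t^I_u-1}q_{uv}$ when $v$ is infected by $u$, and $(1-q_{uv})$ raised to the number of slots $u$ contacts a still‑susceptible $v$ otherwise — times a product of per‑node recovery factors $(1-p_v)^{t^R_v-t^I_v-1}p_v$ or $(1-p_v)^{t-t^I_v}$. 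Given the optimal ${\bf X}^{\dag}[0,t_1+1]\in\tilde{{\cal X}}(t_1+1)$ with times $(t^I_v,t^R_v)$, I define ${\bf X}'[0,t_1]$ by the clipped one‑slot compression
\[
\hat t^I_v=\max\{\,d(v_r,v),\ t^I_v-1\,\},\qquad \hat t^R_v=\max\{\,t^R_v-1,\ \hat t^I_v+1\,\}
\]
($\infty\mapsto\infty$), and by declaring every node at distance $t_1+1$ from $v_r$ uninfected in ${\bf X}'$ (in ${\bf X}^{\dag}$ such a node, if infected, is infected exactly at time $t_1+1$ and is therefore a leaf of its infection tree). The first step is to check ${\bf X}'[0,t_1]\in\tilde{{\cal X}}(t_1)$: infection times stay strictly increasing along each root‑path and each child is infected while its parent is still infectious (so ${\bf X}'$ is a legitimate SIR realization), and any observed infected node $v$ — which has $d(v_r,v)\le\tilde{e}(v_r,{\cal I}_{{\bf Y}})\le t_1$ and $t^R_v=\infty$ in ${\bf X}^{\dag}$ — satisfies $\hat t^I_v\le t_1$ and $\hat t^R_v=\infty$, hence $X'_v(t_1)=I$.

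The second step is the factor‑by‑factor comparison of $\Pr({\bf X}')$ with $\Pr({\bf X}^{\dag})$. Since all infection and recovery times shrink (clipped at the distance to $v_r$) while the horizon drops by one, every ``waiting exponent'' above is nonincreasing, so each factor of $\Pr({\bf X}')$ is at least its counterpart in $\Pr({\bf X}^{\dag})$; the only delicate point is the pruned boundary, but a node at distance $t_1+1$ has a parent at distance $t_1$ that is necessarily infected exactly at time $t_1$, and that parent makes no contact and no recovery attempt within the shorter horizon $t_1$, so the affected factors only rise (from $\le1$ to $1$). It then remains to exhibit one strictly increasing factor. Fix a farthest observed infected node $u^*$ with $d(v_r,u^*)=\tilde{e}(v_r,{\cal I}_{{\bf Y}})=:D\le t_1$ and the tree‑path $v_r=w_0,\dots,w_D=u^*$. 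If every $w_i$ is infected as early as possible in ${\bf X}^{\dag}$ (that is, $t^I_{w_i}=i$), then $u^*$ is infected at time $D$ and, being observed, stays infected through time $t_1+1$; its no‑recovery factor is $(1-p_{u^*})^{t_1+1-D}$ in ${\bf X}^{\dag}$ but $(1-p_{u^*})^{t_1-D}$ in ${\bf X}'$, a strict gain of $1/(1-p_{u^*})>1$. Otherwise let $(w_j,w_{j+1})$ be the first edge on the path with $t^I_{w_{j+1}}>j+1$; then $w_j$ is not accelerated while $w_{j+1}$ is, so this edge's failed‑contact exponent $t^I_{w_{j+1}}-t^I_{w_j}-1\ (\ge1)$ drops by exactly one in ${\bf X}'$, a strict gain of $1/(1-q_{w_jw_{j+1}})>1$. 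Either way $\Pr({\bf X}')>\Pr({\bf X}^{\dag})=\max_{{\bf X}[0,t_1+1]\in\tilde{{\cal X}}(t_1+1)}\Pr({\bf X}[0,t_1+1])$, and since ${\bf X}'\in\tilde{{\cal X}}(t_1)$ this yields the claimed strict inequality.

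The main obstacle is exactly the bookkeeping of the second step: verifying that the clipped compression is well defined and probability‑nondecreasing \emph{at every factor}, most delicately at the distance‑$(t_1+1)$ boundary where infected nodes must be dropped — this is where the specific clipping $\max\{d(v_r,v),t^I_v-1\}$ earns its keep, guaranteeing that no node is ``un‑accelerated'' precisely when its child is pruned — and then pinning down the single genuinely strict factor via the case split at a farthest observed node. A secondary nuisance is the degenerate regime $q_{uv}\in\{0,1\}$ or $p_v\in\{0,1\}$, which is why $0<q,p<1$ is assumed at the outset; with $q_{uv}=1$, for instance, a ``slowed'' edge already carries probability $0$ and strictness would have to be sourced elsewhere.
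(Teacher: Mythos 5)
Your proof is correct, and it reaches the same \emph{witness path} as the paper's but by a genuinely different route. The paper proves the one-slot inequality by induction on the observed infection eccentricity $k$ (the classes ${\cal Y}^k$): it splits the children's subtrees into ${\cal T}^i$ (containing observed infected nodes) and ${\cal T}^h$ (not), applies the induction hypothesis to subtrees whose root is infected in the first slot, shifts the remaining subtrees one slot earlier, and defers most of the bookkeeping to Lemma~1 of \cite{ZhuYin_12}. You instead write down the compressed path in closed form, $\hat t^I_v=\max\{d(v_r,v),\,t^I_v-1\}$ with the matching clipped recovery times, and verify the comparison factor by factor; unrolling the paper's recursion produces exactly this map (a node is accelerated iff $t^I_v>d(v_r,v)$, and this property is inherited by all descendants), so the two constructions coincide. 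What your version buys: it is self-contained rather than leaning on the prior paper; it makes the \emph{strictness} explicit by locating a single strictly improving factor along the path to a farthest observed infected node (the paper's ``one extra time slot during which some extra events occur'' is the same point, left implicit, and like yours it silently needs $0<q_{uv}<1$ and $0<p_v<1$); and it actually proves the ``in addition'' claim, including the emptiness of $\tilde{{\cal X}}(t)$ for $t<\tilde e(v_r,{\cal I}_{\bf Y})$ and the existence of a positive-probability consistent path for $t\ge\tilde e(v_r,{\cal I}_{\bf Y})$, which the paper's proof does not spell out. What the paper's induction buys is brevity and reuse of the homogeneous-case machinery, plus a structure (the ${\cal T}^i$/${\cal T}^h$ split and the per-subtree argument) that is then recycled in the proof of the adjacent-nodes inequality (Lemma~\ref{lem:mainlemmap}), whereas your global compression would need to be re-derived relative to a non-root pivot there.
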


\begin{proof}
We adopt the notations defined in \cite{ZhuYin_12}, which are listed below:
\begin{itemize}
\item ${\cal C}(v)$ is the set of children of $v.$
\item \textcolor{black}{$\phi(v)$ is the parent of node $v.$}
\item ${\cal Y}^k$ is the set of infection topologies where the maximum distance from $v_r$ to an infected node is $k$.  All possible infection topologies are then partitioned into countable subsets $\{{\cal Y}^k\}.$
\item $T_v$ is the tree rooted in $v.$
\item $T_v^{-u}$ is the tree rooted in $v$ without the branch from its neighbor $u.$
\item ${\bf X}([0,t],T_v^{-u})$ is the sample path restricted to topology $T_v^{-u}.$
\end{itemize}
Considering the case where the time difference of two sample paths is
one, we will show that
\begin{align*}
\max_{{\bf X}[0,t]\in \tilde{{\cal X}}(t)} \Pr({\bf X}[0,t])>
\max_{{\bf X}[0,t+1] \in \tilde{{\cal X}}(t+1)} \Pr({\bf X}[0,
t+1]).
\end{align*}

\textcolor{black}{Next, we use induction over ${\cal Y}^k.$ }

{\bf Step 1} ${\bf k=0}$ $v_r$ is  the only observed infected node in this case. Given a sample path ${\bf
X}[0,t+1]\in \tilde{{\cal X}}(t+1)$, the probability of the sample path can be written as
\[
\Pr\left({\bf X}[0,t+1]\right)=\Pr\left({\bf X}[0,t]\right)\Pr({\bf
X}(t+1)|{\bf X}[0,t])
\]
Since $v_r$ is the only observed infected node and all other nodes' states are unknown, we assign ${\bf X}'[0,t]\in \tilde{{\cal X}}(t)$ to be same as the first $t$ time slots in ${\bf X}[0,t+1],$ i.e., ${\bf X}'[0,t]={\bf X}[0,t].$
Hence, we obtain that
\[
\Pr\left({\bf X}'[0,t]\right)=\Pr\left({\bf X}[0,t]\right)>
\Pr\left({\bf X}[0,t+1]\right)
\]
Therefore, the case $k=0$ is proved.

{\bf Step 2} Assume the inequality holds for $k\leq n,$ and
consider $k=n+1,$ i.e., ${\bf Y}\in{\cal Y}^{n+1}.$  Clearly, $t\geq n+1\geq 1$ for each ${\bf X}[0,t].$
Furthermore, the set of subtrees ${\cal T}=\{T^{-v_r}_{u}|u \in
{\cal C}(v_r)\}$ are divided into two subsets: $${\cal
T}^h=\{T^{-v_r}_{u}|u \in {\cal C}(v_r), T^{-v_r}_{u}\cap {\cal I}_{{\bf Y}}=\emptyset\},$$ and $${\cal
T}^i={\cal T}\backslash {\cal T}^h.$$ Given  $t_{v_r}^R,$ the infection processes on the
sub-trees are mutually independent.
We construct ${\bf X}'[0,t]$ which occurs more likely than ${\bf X}^*[0,t+1]$ according to the following steps, where ${\bf X}^*[0,t+1]=\max_{{\bf X}[0,t+1] \in \tilde{{\cal X}}(t+1)} \Pr({\bf X}[0,
t+1]).$ 

{\bf Part 1 }${\bf {\cal T}^i}.$  For a subtree in ${\cal
T}^i,$ the proof follows Step 2.b and Step 2.c of Lemma 1 in \cite{ZhuYin_12}. The intuition is as follows: Consider a subtree and a sample path on it with duration $t+1.$ If $u$ is not infected at the first time slot, we can construct a sample path with duration $t$ by moving the events one time slot earlier. The new sample path (with duration $t$) has a higher probability to occur than the original one. If $u$ is infected in the first time slot, we can invoke the induction assumption to the subtree rooted at $u,$  which belongs to ${\cal Y}^n.$

{\bf Part 2} ${\bf v_r}.$ In this part, we have the freedom to assign the unobserved node as infected or healthy. In Part 1, the infection time of each
root $u$ in subtrees ${\cal T}^i$ of ${\bf X}'[0,t]$ is either
the same as or one time slot earlier than its infection time in
${\bf X}^*[0,t+1].$ Therefore, if $t^R_{v_r}\leq t,$ the recovery time of the source $v_r$ in ${\bf X}'[0,t]$ can be assigned the same as that in
${\bf X}^*[0,t+1].$

If $t^R_{v_r}=t+1,$ the source
$v_r$ recovers at time slot $t+1$ which means $v_r$ is not observed since the observation set only contains infected nodes. Therefore, in ${\bf X}'[0,t]$ we assign the
source to be in state $I$ at time $t,$ which is the same as the state of $v_r$ at time $t$ in ${\bf X}^*[0,t+1].$

If $t^R_{v_r}>t+1,$ $v_r$ remains infected in the sample path ${\bf X}^*[0,t+1].$ We assign the source to be in state $I$ in ${\bf X}'[0,t].$

As a summary, according to the assignment above, the states of the source $v_r$ in ${\bf X}'[0,t]$ are the same as those of the first $t$ time slots in ${\bf
X}^*[0,t+1].$

{\bf Part 3} ${\bf {\cal T}^h}.$  Based on the conclusion of Part 2, the subtrees belonging to ${\cal T}^h$ in ${\bf X}'[0,t]$ mimic the behaviors of  the first $t$ time slots in ${\bf
X}^*[0,t+1].$

Since ${\bf X}^*[0,t+1]$ has one extra time
slot during which some extra events occur,  ${\bf X}'[0,t]$ occurs with a higher
probability on the subtrees in ${\cal T}^h.$

According to the discussion above,
we conclude that
time inequality holds for $k=n+1,$ hence for any $k$
according to the principle of induction.
Therefore, the lemma holds.
\end{proof}

\begin{lemma} {\bf (Adjacent Nodes Inequality)}  \label{lem:mainlemmap}
Consider an infinite tree with partial observation
${\bf Y}$ which contains at least one infected node. For $u,v \in {\cal V}$ such that $(u,v)\in {\cal E}$,
if $t^*_u>t^*_v$
\begin{align*}
\Pr({\bf X}^*_u[0,t^*_u]) < \Pr({\bf X}^*_v[0,t^*_v]),
\end{align*}
where ${\bf X}^*_u[0, t_u^*]$ is the optimal sample path associated
with root $u.$
\end{lemma}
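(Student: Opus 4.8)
The plan is to prove Lemma~\ref{lem:mainlemmap} by an explicit construction: given the optimal sample path ${\bf X}^*_u[0,t^*_u]$ rooted at $u$, I will build a sample path ${\bf X}'_v[0,t^*_v]$ rooted at $v$ that is consistent with ${\bf Y}$ and show $\Pr({\bf X}'_v[0,t^*_v]) > \Pr({\bf X}^*_u[0,t^*_u])$; since ${\bf X}^*_v[0,t^*_v]$ is optimal among paths rooted at $v$, the claimed inequality follows. First I would invoke Lemma~\ref{lem:timeinequalityp} to pin down the durations: $t^*_u=\tilde e(u,{\cal I}_{\bf Y})$ and $t^*_v=\tilde e(v,{\cal I}_{\bf Y})$. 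Since $u$ and $v$ are adjacent, their eccentricities differ by at most one, so $t^*_u > t^*_v$ forces $t^*_u = t^*_v + 1$; this reduces everything to the one-time-slot case, matching the structure of the Time Inequality proof.

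The construction itself splits the tree along the edge $(u,v)$ into $T_u^{-v}$ (the side containing $u$) and $T_v^{-u}$ (the side containing $v$), and treats $v$ as the new source. On $T_v^{-u}$, the distance from $v$ to any observed infected node there is at most $\tilde e(v,{\cal I}_{\bf Y}) = t^*_v$, which is strictly less than the $t^*_u$ time slots that ${\bf X}^*_u$ used on that side; so I can compress the infection history on $T_v^{-u}$ to fit in $t^*_v$ slots, gaining probability by dropping the extra slot's worth of infection/recovery events, exactly as in Parts~1--3 of Lemma~\ref{lem:timeinequalityp} (moving events one slot earlier when the relevant root is not infected in the first slot, and recursing into a lower ${\cal Y}^k$ class otherwise). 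On the $T_u^{-v}$ side, $v$ now reaches those nodes through $u$, so every node on that side is infected one slot \emph{later} relative to the new source's clock; but since $v$'s eccentricity toward that side is $t^*_u$ and we have exactly $t^*_v = t^*_u - 1$ slots after $v$ infects $u$ at slot~1, i.e.\ $t^*_u$ slots total counting from $v$, the timing works out and we can replay essentially the same events shifted by one slot, with $v$ itself occupying the slot that $u$'s history had no node in. The net effect is that ${\bf X}'_v$ on the $u$-side costs about the same (one extra infection edge $v\to u$, but one fewer slot of potential recovery somewhere), while on the $v$-side it is strictly cheaper, so the product of probabilities strictly increases.

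The main obstacle I anticipate is the bookkeeping on the $T_u^{-v}$ side: shifting the whole infection history one slot later means $u$ must be infected at time~$1$ rather than time~$0$, and I need to verify that (i) the recovery constraint ``an infected node recovers only if infected before the current slot'' is not violated by the shift, and (ii) the consistency condition ${\bf F}({\bf X}'_v(t^*_v)) = {\bf Y}$ still holds --- in particular that every observed infected node on the $u$-side is still in state $I$ at the final time, which requires checking its infection time is $\le t^*_v$ (true, since its distance from $v$ is $\le t^*_u = t^*_v+1$ and it is infected at that distance minus one after $u$, i.e.\ at time $\le t^*_v$). A secondary subtlety is handling observed infected nodes that lie on the $v$-side but whose $u$-path was used in ${\bf X}^*_u$: I should argue the split is clean because on a tree the shortest path from any node to an observed infected node goes through exactly one of $u,v$. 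I would organize the write-up to mirror Lemma~\ref{lem:timeinequalityp}'s Parts~1--3 so the reader sees the induction/compression machinery is being reused rather than redone from scratch, and conclude by chaining the strict inequality through the construction.
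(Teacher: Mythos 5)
Your high-level strategy --- split the tree along the edge $(u,v)$ into $T_u^{-v}$ and $T_v^{-u}$, have $v$ infect $u$ in the first slot, and reuse Lemma~\ref{lem:timeinequalityp} --- is the same as the paper's. But you have inverted which side of the cut supplies the strict probability gain, and that inversion breaks the argument. You claim the gain comes from ``compressing'' the $v$-side because ${\bf X}^*_u$ ``used $t^*_u$ time slots on that side.'' It did not: in ${\bf X}^*_u$ the node $v$ is not infected until some time $t^I_v\geq 1$, so the sub-history on $T_v^{-u}$ has effective duration $t^*_u-t^I_v\leq t^*_u-1=t^*_v$, which is already no longer than the $t^*_v$ slots your new path allots to that side. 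There is nothing to compress, and no strict inequality is available there; the paper accordingly has the new path merely \emph{mimic} ${\bf X}^*_u$ on $T_v^{-u}$.

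The strict gain has to come from the $u$-side, where you instead propose to ``replay the same events shifted one slot later'' at ``about the same cost.'' This fails twice over. First, the old $u$-side history occupies $[0,t^*_u]$, while after $v$ infects $u$ at slot $1$ the new path only has $[1,t^*_v]=[1,t^*_u-1]$ available, so the shifted replay does not fit; your consistency check assumes every observed infected node is infected exactly at its graph distance from the source, which is not guaranteed for an arbitrary optimal sample path. Second, ``about the same'' is not enough --- you need strict improvement somewhere. The correct observation is that every observed infected node in $T_u^{-v}$ lies within distance $t^*_v-1$ of $u$ (a node at distance $d$ from $u$ on that side is at distance $d+1\leq t^*_v$ from $v$), so the observed infection eccentricity of $u$ restricted to $T_u^{-v}$ is at most $t^*_v-1$. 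Applying Lemma~\ref{lem:timeinequalityp} to the subtree $T_u^{-v}$ with durations $t_1=t^*_v-1<t_2=t^*_u$ then yields a \emph{strictly} more probable sub-path of the shorter duration, and this is the step that produces the strict inequality in the lemma. Swapping the roles of the two sides back, and doing the edge-factor bookkeeping for the $u\to v$ attempts in ${\bf X}^*_u$ versus the single successful $v\to u$ attempt in the new path, would repair the proof.
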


\begin{proof}
The proof of the lemma follows the proof of Lemma 2 in \cite{ZhuYin_12}. The key idea is to construct a sample path rooted at $v,$ which has a higher probability than the optimal sample path rooted at $u.$ It is not hard to see that $t^*_u=t^*_v+1$ based on the definition of the infection eccentricity. The graph is partitioned into $T^{-u}_v$ and $T^{-v}_u$ which are mutually independent after the infection of $v$ and $u.$ With this observation, we construct $\tilde{{\bf X}}_v[0,t^*_v])$ which infects $u$ at the first time slot. $\tilde{{\bf X}}_v([0,t^*_v],T^{-u}_v)$ then mimics the behavior of ${\bf X}^*_u([0,t^*_u],T^{-u}_v)$  and $\tilde{{\bf X}}_v([0,t^*_v-1],T^{-v}_u)$ has a higher probability than ${\bf X}^*_u([0,t^*_u],T^{-v}_u)$ based on Lemma \ref{lem:timeinequalityp}.
\end{proof}
The adjacent nodes inequality results in partial orders in the tree and makes it possible to compare the likelihood of optimal sample paths associated with adjacent nodes without knowing the actual probability of the optimal sample path. Following the proof of Theorem 4 in \cite{ZhuYin_12}, it can be shown that in tree networks, from any node, there exists a path from the node to a Jordan infection center such that the observed infection eccentricity strictly decreases along the path. By repeatedly using Lemma \ref{lem:mainlemmap}, we can then prove that the source of the optimal sample path must be a Jordan infection center.

\subsection{Proof of Theorem \ref{th:perfromanceguarantee}}\label{sec:performance}

In this subsection, we present the proof that shows that the sample path estimator is within a constant distance from the actual source independent of the size of the infected subnetwork. Given a tree rooted in $v^*$ where the information starts from $v^*$ following the general SIR model, we define the following three branching processes.

1) ${\cal Z}_l(T_{v^*})$ denotes the set of nodes which are in infected or recovered states at level $l$ on tree $T_{v^*}.$ Let $Z_l(T_{v^*})$ denote the cardinality of ${\cal Z}_l(T_{v^*}).$ Note that ${\cal Z}_0(T_{v^*})=\{v^*\}.$ We call this process the \emph{original infection process.}

2) ${\cal Z}^\tau_l(T_{v^*})$ denotes the set of infected and recovered
nodes at level $l$ whose parents are in set ${\cal Z}^\tau_{l-1}(T_{v^*})$ and who were infected within $\tau$ time slots after their parents were infected. This process adds a deadline $\tau$ on infection. If a node is not infected within $\tau$ time slots after its parent is infected, it is not included in this branching process. This process is called \emph{$\tau-$deadline infection process.} From the definition, if $u,v\in{\cal Z}^\tau_l(T_{v^*}),$ then
\[
|t^I_u-t^I_v|\leq l(\tau-1).
\]
For $\tau=1,$ we call ${\cal Z}^1_l(T_{v^*})$ the  \emph{one time slot infection process}. The extinction probability of a branching process is the probability that there is no offspring at certain level of the branching process, i.e., $ Z^1_l(T_{v^*})=0$ for some $l.$ Denote by $\rho_v$ the extinction probability of $Z^1_l(T^{-\phi(v)}_{v}).$

3) We define the \emph{binomial branching process} as a branching process whose offspring distribution follows binomial distribution $B(g,\varphi)$ where $g$ is the number of trials and $\varphi$ is the success probability. Denote by $\rho$ the extinction probability of the binomial branching process.

 The following notations will be used in later analysis.
\begin{itemize}
\item $v^\dag$ denotes the optimal sample path estimator.
\item $g_{\min}$ is the lower bound on the number of children, i.e.,
\[
\min_v|{\cal C}(v)|\geq g_{\min}, \forall v \in {\cal V}.
\]
\item $q_{\min}$ is the lower bound on the infection probability, i.e.,
\[
q_{\min}=\min_e q_e , \forall e \in {\cal E}.
\]

\item $\sigma^\tau_v$ is the probability that a node $v$ infects at least one of its children within $\tau$ time slot after $v$ is infected.

\end{itemize}

Given $n_0>0$ and $\tau>0,$ define $l^\dag=\min l'$ where $Z^\tau_{l'}(T_{v^*})>n_0,$ i.e., $l^\dag$ is the first level where the $\tau$-deadline infection process has more than $n_0$ offsprings.

Given $\tau$ and level $L\geq 2$, we consider the following two events:

 {\bf Event 1}: $Z_L(T_{v^*})=0.$
 
{\bf Event 2}: $l^\dagger\leq L$ and at least two one time slot infection processes starting from level $l^\dag$ survive, i.e., $\exists u,v\in {\cal Z}^\tau_{l^\dag}(T_{v^*})$ such that $\forall l,$ $Z^1_l(T_u^{-\phi(u)})\neq 0$ and $ Z^1_l(T_v^{-\phi(v)})\neq 0.$ In addition, at least one infected node at the bottom of each survived one time slot infection process is observed.

For event 1, no node at level $L$ gets infected and the infection
process terminates at level $L-1.$ So the infection eccentricity of $v^*$ is at most $L-1,$ and the minimum infection eccentricity of the network is at most $L-1.$ Therefore, the distance between
$v^*$ and $v^{\dag}$ is no more than $2(L-1).$

Considering event 2,  we assume the information propagates for $t$ time slots. The deadline property of the $\tau$-deadline infection process indicates $t^I_{u_1}\leq\tau l^\dag$ and $t^I_{u_2}\leq\tau l^\dag.$  Given a node $\tilde{v}$ at level $(\tau+1)l^\dag-1$ where  $\tilde{v}\in T_{u_2}^{-\phi(u_2)}$ and a node $v'\in T_{u_1}^{-\phi(u_1)}$ which is an observed infected node at the bottom of the infection tree, from Figure \ref{figure:PerformanceDistance}, we obtain
\begin{align*}
d(\tilde{v},v')&=t-t^I_{u_1}+\tau l^\dag+1\\
&\geq t+1.
\end{align*}
Note that $\forall u \in {\cal I},$
\[
d(v^*,u)\leq t<d(\tilde{v},v').
\]
Since $l^\dag\leq L,$ any node at or below level $L(\tau+1)-1$ has an infection eccentricity larger than that of $v^*.$ Hence, $v^\dag$ cannot be at or below level $L(\tau+1)-1.$ Therefore,
$$d(v^\dag,v^*)< (\tau+1)L-1.$$

Next, we prove the probability that either event 1 or event 2 happens goes asymptotically to $1.$ Denote by $K_{l^\dag}$ the number of one time slot infection processes which start from level $l^\dag$ and survive. Denote by $E$ the event that a  survived one time slot infection process has at least one observed infected node at its lowest level.
\begin{figure}[t!]
\begin{centering}
  \includegraphics[width=0.7\columnwidth]{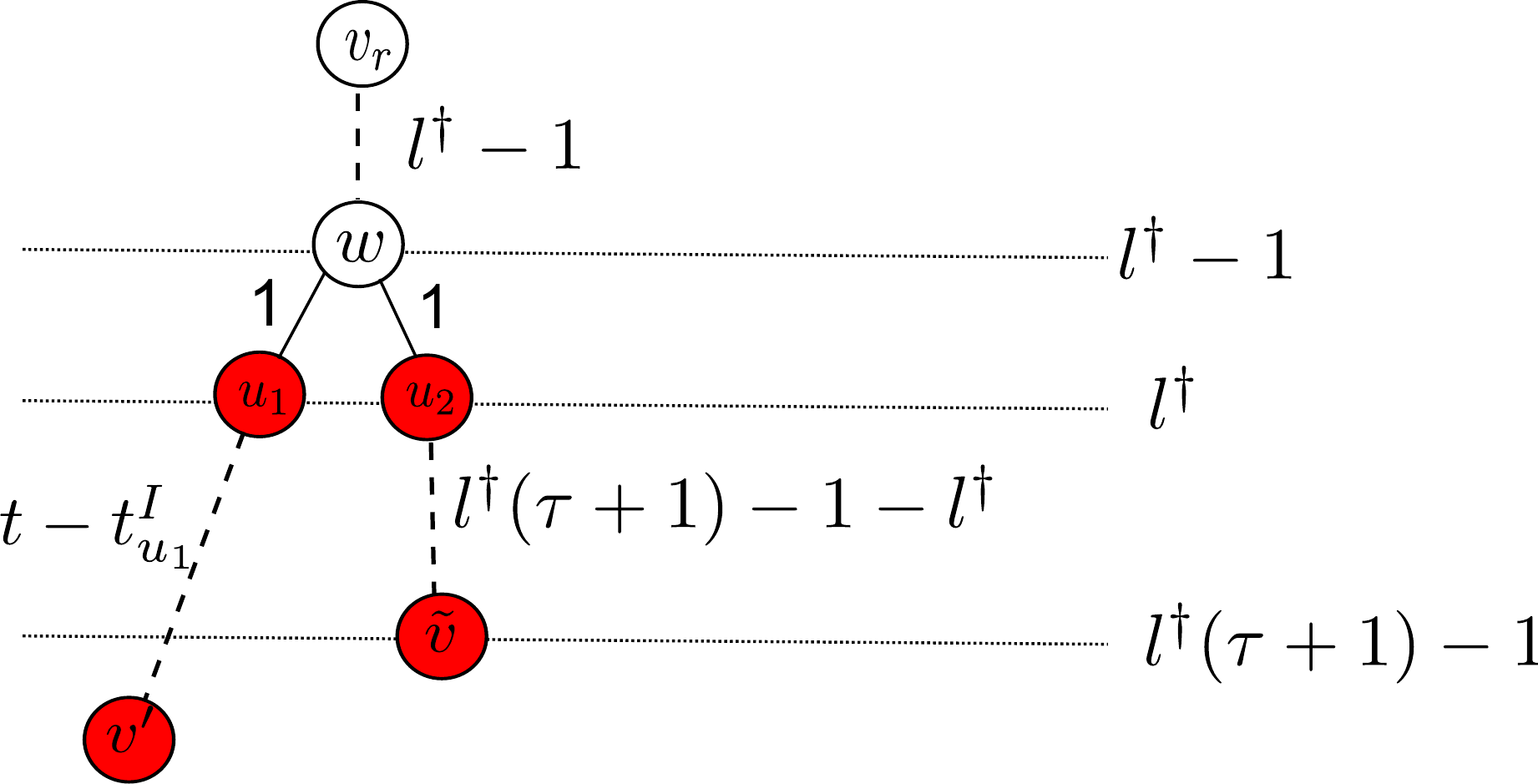}\\
  \caption{A Pictorial Description of the Distance Relations in Theorem \ref{th:perfromanceguarantee}}\label{figure:PerformanceDistance}
  \end{centering}
\end{figure}
According to the discussion above, the probability that the distance between the estimator and the actual source is no more than $(\tau+1)L-1$ is at least
\begin{align*}
&\Pr(Z_L(T_{v^*})=0)+\Pr(K_{l^\dag}\geq 2, l^\dag\leq L)\Pr(E)^2\\
&\geq\Pr(Z_L(T_{v^*})\!=\!0)\!+\!\Pr\left(l^\dag\!\leq \!L\right) \Pr\left(K_{l^\dag}\!\geq \!2\Big|l^\dag\!\leq \!L\right)\!\Pr(E)^2\\
&=\Pr(Z_L(T_{v^*})=0)+\Pr\left(\bigcup_{i=1}^{L}
Z^{\tau}_i>n_0\right)\\
&\times \Pr(K_{l^\dag}\geq 2|l^\dag\leq L)\Pr(E)^2\\
&=\left(1\!\!-\!\!\Pr\left(\bigcap_{i=1}^{L}0<Z^{\tau}_i(T_{v^*})\leq n_0\right)\!\!-\!\Pr\left(\bigcup_{i=1}^{L}Z^{\tau}_i(T_{v^*})\!=\!0\right)\right)\\
&\times\Pr(K_{l^\dag}\geq 2|l^\dag\leq L)\Pr(E)^2+\Pr(Z_L(T_{v^*})=0).
\end{align*}
In addition, we have
\begin{align}
&\Pr(K_{l^\dag}\geq 2|l^\dag\leq L)\\
&=\sum_{l=1}^L \Pr(K_{l^\dag}\geq 2, l^\dagger=l| l^\dag\leq L)\\
&=\sum_{l=1}^L \Pr(K_{l^\dag}\geq 2|  l^\dagger=l) \Pr( l^\dagger=l|l^\dag\leq L). \label{eqn:KL}
\end{align}
In Lemma 5, we prove  that the extinction probability of each branching process from level $l^\dag$ is upper bounded by the exitinction probability $\rho$ of the binomial infection process $B(g_{\min},q_{\min}).$  Therefore, at level $l^\dag$ we have $n_0$ i.i.d one time infection processes whose extinction probabilities are upper bounded by $\rho.$ The probability that at least two of them survive goes asymptotic to 1 when $n_0$ increases. Therefore, $\forall \epsilon_1>0,$ we have enough large $n_0,$ such that
\[
\Pr(K_{l^\dag}\geq 2|  l^\dagger=l)\geq 1-\epsilon_1.
\]
Therefore, equation (\ref{eqn:KL}) becomes
\begin{align*}
&\Pr(K_{l^\dag}\geq 2|l^\dag\leq L)\\
&\geq (1-\epsilon_1)\sum_{l=1}^L \Pr( l^\dagger=l|l^\dag\leq L)\\
&=(1-\epsilon_1).
\end{align*}

We show in Lemma 7 that $\Pr(E)\geq 1-\epsilon_2$ given $\epsilon_2>0.$  If
$n_0$ and $t$ are sufficiently large, we have
\[
\Pr(K_{l^\dag}\geq 2|l^\dag\leq L)\Pr(E)^2 \geq
(1-\epsilon_1)(1-\epsilon_2)^2.
\]
Therefore,
\begin{align}
&\Pr(Z_L(T_{v^*})=0)+\Pr(K_{l^\dag}\geq 2, l^\dag\leq L)\Pr(E)^2 \\
&\geq \left(1-\Pr\left(\bigcap_{i=1}^{L}0<Z^{\tau}_i(T_{v^*})\leq
n_0\right)\right)\\
&\times(1-\epsilon_1)(1-\epsilon_2)^2\\
&-\Pr\left(\bigcup_{i=1}^{L}Z^{\tau}_i(T_{v^*})=0\right)+\Pr(Z_L(T_{v^*})=0)\\
&=\underbrace{\left(1-\Pr\left(\bigcap_{i=1}^{L}0<Z^{\tau}_i(T_{v^*})\leq
n_0\right)\right)}_{\hbox{Part 1}}\\
&\times (1-\epsilon_1)(1-\epsilon_2)^2\\
&+\underbrace{\Pr(Z_L(T_{v^*})=0)-\Pr(Z_L^\tau(T_{v^*})=0)}_{\hbox{Part 2}},\label{eqn:boundedevent}
\end{align}
where equation (\ref{eqn:boundedevent}) holds since
$Z_l^{\tau}(T_{v^*})=0$ implies that $Z_L^{\tau}(T_{v^*})=0$ for $l\leq L.$

For part 1 in equation (\ref{eqn:boundedevent}), we prove in Lemma 6, given $\epsilon_3>0,$ when $\tau$ and $L$ are sufficiently large,
$$1-\Pr\left(\bigcap_{i=1}^{L}0<Z^{\tau}_i(T_{v^*})\leq
n_0\right)>1-\epsilon_3.$$

For part 2 in equation (\ref{eqn:boundedevent}), we have
\[
\lim_{\tau \rightarrow \infty}\Pr(Z_L^\tau(T_{v^*})=0)=\Pr(Z_L(T_{v^*})=0).
\]
Therefore, given $\epsilon_4>0,$ when $\tau$ is sufficiently large,
\[
\Pr(Z_L(T_{v^*})=0)-\Pr(Z_L^\tau(T_{v^*})=0)\geq -\epsilon_4.
\]

Hence, we have
\begin{align*}
&\Pr(Z_L(T_{v^*})=0)+\Pr(K_{l^\dag}\geq 2, l^\dag\leq L)\Pr(E)^2\\
&\geq(1-\epsilon_1)(1-\epsilon_2)^2(1-\epsilon_3)-\epsilon_4.
\end{align*}
 Now choosing $\epsilon_1=\epsilon_2=\epsilon_3=\epsilon_4=\epsilon_5/5$ for some $\epsilon_4>0,$ we have
\begin{align*}
&\Pr(Z_L(T_{v^*})=0)+\Pr(K_{l^\dag}\geq 2, l^\dag\leq L)\Pr(E)^2\\
&\geq 1-\epsilon_5.
\end{align*}
Now let $|{\bf Y}|$ denote the number of infected nodes in the observation $\bf Y.$ Define events $E_1=\{Z_L=0\}$ and $E_2=\{K_{l}\geq 2\hbox{ for some }l\leq L\}$ and $E_3$ is the event that two of the survived one time slot infection processes have at least one observed infected node each at their bottoms. We have
\begin{align*}
&\Pr(E_1||{\bf Y}|\geq 1)+\Pr\left(E_2\cap E_3||{\bf Y}|\geq 1\right)\\
=&\frac{1}{\Pr(|{\bf Y}|\geq 1)}\left({\Pr(E_1\cap \{|{\bf Y}|\geq 1\})}\right.\\
&\left.+{\Pr\left(E_2\cap E_3\cap \{|{\bf Y}|\geq 1\}\right)}\right).
\end{align*}
Since $E_2\cap E_3$ implies that $|{\bf Y}|\geq 1,$ we have
\begin{align*}
&\Pr(E_1||{\bf Y}|\geq 1)+\Pr\left(E_2\cap E_3||{\bf Y}|\geq 1\right)\\
=&\frac{1}{\Pr(|{\bf Y}|\geq 1)}\left({\Pr(E_1\cap \{|{\bf Y}|\geq 1\})}+{\Pr\left(E_2\cap E_3\right)}\right)\\
=&\frac{1}{\Pr(|{\bf Y}|\geq 1)}\left(\Pr(E_1)-{\Pr(E_1\cap \{|{\bf Y}|=0\})}\right.\\
&\left.+{\Pr\left(E_2\cap E_3\right)}\right)\\
\geq &\frac{1}{\Pr(|{\bf Y}|\geq 1)}\left(\Pr(E_1)-{\Pr(\{|{\bf Y}|=0\})}+{\Pr\left(E_2\cap E_3\right)}\right)\\
\geq &\frac{1}{\Pr(|{\bf Y}|\geq 1)}\left({\Pr(\{|{\bf Y}|\geq 1\})}-\epsilon_5\right)\\
= & 1-\frac{\epsilon_5}{\Pr(|{\bf Y}|\geq 1)}.
\end{align*}
Note that $\Pr(|{\bf Y}|\geq 1)$ is a positive constant since \textcolor{black}{the one time slot infection process} starting from the information source survives with non-zero probability. The theorem holds by choosing $\epsilon_5=\epsilon \Pr(|{\bf Y}|\geq 1).$

\begin{lemma}\label{lem:2survive}
The extinction probability of an one time slot infection process is smaller than the extinction probability of a binomial branching process $B(g_{\min},q_{\min}),$ i.e., $\forall v\in{\cal V},$
$$\rho_v<\rho.$$
\end{lemma}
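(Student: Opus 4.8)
The plan is to view the one time slot infection process on $T^{-\phi(v)}_{v}$ as a branching process whose per-node offspring laws all stochastically dominate the offspring law $B(g_{\min},q_{\min})$ of the binomial branching process, and then to invoke the standard monotonicity of the extinction probability under stochastic domination. First I would pin down the offspring law: a present node $u$ transmits to a child $w\in{\cal C}(u)$ exactly when the single contact attempt made in the slot immediately after $u$ becomes infected succeeds, an event of probability $q_{uw}$, and by the model assumptions these events are mutually independent over $w\in{\cal C}(u)$; moreover the subtrees hanging below distinct children involve disjoint edge sets and therefore evolve independently, so the process is a genuine (node-inhomogeneous) branching process. Hence $u$ has $\sum_{w\in{\cal C}(u)}\mathrm{Bernoulli}(q_{uw})$ offspring, a sum of $|{\cal C}(u)|\ge g_{\min}$ independent Bernoullis with parameters $q_{uw}\ge q_{\min}$, which stochastically dominates $B(g_{\min},q_{\min})$.

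Next I would make the domination explicit by a thinning that embeds a Galton--Watson tree with offspring law $B(g_{\min},q_{\min})$ inside the one time slot infection tree: for each present node $u$ fix $g_{\min}$ of its children $w_1,\dots,w_{g_{\min}}$, and call $w_i$ a ``binomial offspring'' of $u$ if $w_i$ is infected in the first slot (probability $q_{uw_i}$) and, in addition, an independent $\mathrm{Bernoulli}(q_{\min}/q_{uw_i})$ draw --- well defined since $q_{\min}\le q_{uw_i}$ --- equals $1$; then $w_i$ is a binomial offspring with probability exactly $q_{\min}$, independently in $i$. The binomial offspring form a Galton--Watson process with law $B(g_{\min},q_{\min})$, and every binomial offspring is in particular present in the one time slot process, so the Galton--Watson tree is contained in the one time slot tree. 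Extinction of the latter therefore forces extinction of the former, and taking probabilities gives $\rho_v\le\rho$.

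The delicate point, and the one I expect to be the main obstacle, is the strict inequality. Here I would argue that the containment is strict with positive probability: as soon as some present node $u$ has more than $g_{\min}$ children, or some incident edge has $q_{uw}>q_{\min}$, there is positive probability that the one time slot process survives through an offspring not used by (or more likely present than in) the embedded Galton--Watson tree, on an event where the Galton--Watson tree dies out; this yields $\Pr(\hbox{Galton--Watson extinct and one time slot not extinct})>0$, hence $\rho_v<\rho$. Making this rigorous requires exhibiting, on the infinite tree, a node at which the domination is genuinely strict and then propagating its effect to the extinction probability (e.g.\ via the fixed-point characterization of $\rho_v$); in the fully degenerate case of a $g_{\min}$-regular tree with all $q_e=q_{\min}$ the two processes coincide and one only has $\rho_v=\rho$. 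I would emphasize, however, that every later use of this lemma needs only the bound $\rho_v\le\rho$ together with $\rho<1$, the latter being guaranteed by $g_{\min}q_{\min}>1$, so the coupling/domination step is the substantive content.
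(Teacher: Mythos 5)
Your proposal is correct and is essentially the paper's argument: both establish $\rho_v\le\rho$ by a monotone coupling that embeds a Galton--Watson tree with offspring law $B(g_{\min},q_{\min})$ inside the one time slot infection tree and then compares extinction events. The only difference is organizational: you do the embedding in one step, selecting $g_{\min}$ children of each node and thinning each retained edge with an independent $\mathrm{Bernoulli}(q_{\min}/q_{uw})$, whereas the paper interposes two auxiliary processes --- a ``min-infection'' process in which every edge transmits with probability $q_{\min}$, and a ``virtual source'' process that tops $q_{\min}$ up to $q_{uw}$ with probability $(q_{uw}-q_{\min})/(1-q_{\min})$ and is argued to have the same law as the one time slot process --- and couples them pairwise. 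Your caveat about strictness is apt, and is in fact a point on which you are more careful than the paper: the paper declares $\rho^{(mi)}_v<\rho$ to be obvious, but when every node has exactly $g_{\min}$ children and every edge has $q_e=q_{\min}$ the min-infection process coincides with the binomial branching process and only $\rho_v=\rho$ holds; as you observe, the proof of Theorem \ref{th:perfromanceguarantee} uses only $\rho_v\le\rho$ together with $\rho<1$ (guaranteed by $g_{\min}q_{\min}>1$), so nothing downstream is affected.
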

\begin{proof}
As shown in Figure \ref{figure:CouplingSystems}, we construct a \emph{virtual source process} $Z^{(vs)}_l(T_{v}^{-\phi(v)})$ and a \emph{min-infection process} $Z^{(mi)}_l(T_{v}^{-\phi(v)})$ as auxiliary processes over the same tree topology where $Y_v^{(vs)}$ and $Y_v^{(mi)}$ are the binary numbers indicating whether node $v$ has been infected. Denote by $\rho^{(vs)}_v$ and $\rho^{(mi)}_v$ the extinction probabilities, respectively.

In the min-infection process,  infection spreads over edges with probability $q_{\min}.$ In the virtual source process, the probability that a node gets infected is
\begin{align*}
\Pr(Y_v^{(vs)}\!=\!1)&=\Pr(Y_v^{(mi)}\!=\!1)\!+\!\Pr(Y_v^{(mi)}=0)\!\cdot\!\frac{q_{uv}-q_{\min}}{1-q_{\min}}\\
&=q_{uv},
\end{align*}
i.e., for each node $u\in{\cal C}(v),$ $v$ tries to infects $u$ with probability $q_{\min}.$ If $v$ fails to infect $u,$ a \emph{virtual source} $v'$ tries to infect $u$ with probability $\frac{q_{vu}-q_{\min}}{1-q_{\min}}.$ Therefore, the virtual source process has the same distribution with the one time slot infection process.

We now couple the min-infection process and the virtual source infection process as follows:
\begin{itemize}
\item If $Y_v^{(mi)}=1,$ then $Y_v^{(vs)}=1.$
\item If $Y_v^{(mi)}=0,$ then $Y_v^{(vs)}=1$ with probability $\frac{q_{uv}-q_{\min}}{1-q_{\min}}.$
\end{itemize}
Since a node is more likely to get infected in the virtual source infection process, we obtain
\[
\rho^{(vs)}_v\leq \rho^{(mi)}_v.
\]
Recalling the one time slot infection process has the same distribution with the virtual source branching process, we obtain $\rho_v\leq \rho^{(mi)}_v,\forall v.$

In addition, the min-infection process has more children than the binomial branching process with the same infection probability for each children. It is obvious that the binomial branching process is more likely to die out, i.e., $\rho^{(mi)}_v<\rho.$

As a summary, we prove
\[
\rho_v<\rho.
\]
\end{proof}

\begin{lemma}\label{lem:morethann}
Assume $\exists \xi>0$ such that $\sigma^\tau_v<1-\xi,\forall v\in{\cal V}.$ Given any $\epsilon>0,$ there exists a constant $L^{\prime}$ such
that for any $L\geq L^{\prime},$
\[
\Pr\left(\bigcap_{i=1}^{L}0<Z^{\tau}_i(T_{v^*})\leq
n_0\right)\leq \epsilon
\]
\end{lemma}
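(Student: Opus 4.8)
The plan is to use the hypothesis $\sigma^\tau_v<1-\xi$ to show that from any configuration in which the $\tau$-deadline infection process is alive but has at most $n_0$ nodes, the process dies out at the next level with probability at least $\xi^{n_0}>0$, uniformly over the configuration. Since the event in the statement forces the process to stay alive and of size at most $n_0$ for $L$ consecutive levels, its probability then decays geometrically in $L$, and can be pushed below $\epsilon$ by taking $L$ large.

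Write $A_i:=\{0<Z^{\tau}_i(T_{v^*})\le n_0\}$ and let ${\cal F}_i$ be the $\sigma$-field generated by the infection times of all nodes at levels $0,\dots,i$; note that ${\cal F}_i$ determines $Z^{\tau}_j(T_{v^*})$ for every $j\le i$. The crux is the one-step estimate
\[
\Pr\big(Z^{\tau}_{i+1}(T_{v^*})=0 \mid {\cal F}_i\big)\ \ge\ \xi^{n_0} \qquad \hbox{on } A_i.
\]
On $A_i$ write ${\cal Z}^{\tau}_i(T_{v^*})=\{v_1,\dots,v_m\}$ with $1\le m\le n_0$. Then $Z^{\tau}_{i+1}(T_{v^*})=0$ precisely when, for each $j$, the node $v_j$ fails to infect any of its children within $\tau$ time slots after its own infection time $t^I_{v_j}$; call this event $B_j$. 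Because $T_{v^*}$ is a tree, the edges joining distinct $v_j$'s to their children are pairwise disjoint, the recovery clock of each $v_j$ is fresh (independent of everything recorded in ${\cal F}_i$), and the $\tau$-window for $v_j$ is measured from $t^I_{v_j}$; hence, using the independence of contact successes across edges and of recoveries across nodes, $B_1,\dots,B_m$ are conditionally independent given ${\cal F}_i$ with $\Pr(B_j\mid {\cal F}_i)=1-\sigma^{\tau}_{v_j}>\xi$. Multiplying, $\Pr(Z^{\tau}_{i+1}=0\mid {\cal F}_i)=\prod_{j=1}^{m}(1-\sigma^{\tau}_{v_j})\ge\xi^{m}\ge\xi^{n_0}$, where the last inequality uses $0<\xi<1$ and $m\le n_0$.

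Given the one-step estimate, the rest is routine. Since $A_{i+1}\subseteq\{Z^{\tau}_{i+1}>0\}$ we get $\Pr(A_{i+1}\mid {\cal F}_i)\le 1-\xi^{n_0}$ on $A_i$, and since $A_1\cap\cdots\cap A_i$ is ${\cal F}_i$-measurable and contained in $A_i$, conditioning on ${\cal F}_i$ yields $\Pr\big(\bigcap_{j=1}^{i+1}A_j\big)\le(1-\xi^{n_0})\,\Pr\big(\bigcap_{j=1}^{i}A_j\big)$. Iterating from $i=1$ and using $\Pr(A_1)\le1$ gives $\Pr\big(\bigcap_{i=1}^{L}A_i\big)\le(1-\xi^{n_0})^{L-1}$; since $0<\xi^{n_0}<1$ this tends to $0$, so the lemma holds with $L':=1+\lceil \ln\epsilon/\ln(1-\xi^{n_0})\rceil$, a constant depending only on $\xi$, $n_0$ and $\epsilon$.

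The step I expect to be the main obstacle is the rigorous justification of the one-step estimate, specifically choosing the conditioning $\sigma$-field so that it both (i) determines $Z^{\tau}_j(T_{v^*})$ for all $j\le i$ --- in particular revealing the identities and infection times $t^I_{v_1},\dots,t^I_{v_m}$ --- and (ii) stays independent of the fresh recovery clocks of the level-$i$ nodes and of the contact variables on the edges descending from them, so that the events $B_j$ genuinely come out conditionally independent with marginals $1-\sigma^{\tau}_{v_j}$. This is where the tree structure (disjointness of the subtrees hanging below the $v_j$'s), the per-link and per-node independence built into the heterogeneous SIR model, and the fact that the $\tau$-deadline is measured from each node's own infection time must all be combined carefully; everything after that is the elementary geometric bound above.
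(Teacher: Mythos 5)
Your proof is correct and follows essentially the same route as the paper, which simply invokes the geometric-decay argument of Lemma 7 in \cite{ZhuYin_12} with the very constant $L'=\left\lceil \log\epsilon/\log\left(1-\xi^{n_0}\right)\right\rceil$ that your iteration produces (up to an immaterial shift by one). Your write-up supplies the conditional-independence details across the disjoint subtrees that the paper leaves to the cited reference, but the underlying one-step extinction bound $\xi^{n_0}$ and the resulting $(1-\xi^{n_0})^{L-1}$ decay are identical.
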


\begin{proof}
 Follows the same argument of Lemma 7 in \cite{ZhuYin_12}, by choosing
$$L^{\prime}=\left\lceil\frac{\log \epsilon}{\log\left(1-\xi^{n_0}\right)}\right\rceil,$$ we obtain for any $L\geq L^{\prime}, \epsilon>0$
$$\Pr\left(\bigcap_{i=1}^{L}0<Z^{\tau}_i(T_{v^*})\leq n_0\right)\leq
\epsilon.$$
\end{proof}

\begin{lemma}\label{lem:NodeNumberIncrease}
For any $\epsilon>0,$ there exists a sufficiently large $t$ such that
\[
\Pr(E)\geq 1-\epsilon
\]
\end{lemma}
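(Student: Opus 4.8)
The plan is to combine two ingredients: first, that a one-time-slot infection process which survives forever has, at time $t$, a bottom level whose cardinality grows without bound as $t\to\infty$; and second, that an unbiased sampling algorithm reports at least one node out of any sufficiently large set of infected nodes with probability arbitrarily close to one.

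First I would view the one-time-slot infection process rooted at a node $k$ (infected at time $t^I_k$) as a Galton--Watson process: conditioned on $k$ being infected, each of the at least $g_{\min}$ children of any node becomes an offspring independently with probability at least $q_{\min}$, so the offspring distribution stochastically dominates $B(g_{\min},q_{\min})$, whose mean $g_{\min}q_{\min}>1$. Hence the process is supercritical, and conditioned on non-extinction its generation size $Z^1_l$ tends to infinity almost surely as $l\to\infty$; this can be made quantitative by coupling with the homogeneous $B(g_{\min},q_{\min})$ process exactly as in the proof of Lemma~\ref{lem:2survive}. Consequently, for every $M$ there is an $l_M$ with $\Pr(Z^1_l\geq M \mid \text{survival})\geq 1-\epsilon/2$ for all $l\geq l_M$. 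Since at time $t$ the lowest non-terminated level of a survived process is level $t-t^I_k$, and $t^I_k$ is a bounded random variable (at most $\tau l^\dagger\leq\tau L$ on the event $\{l^\dagger\leq L\}$ under consideration), taking $t$ large enough forces $t-t^I_k\geq l_M$, so with probability at least $1-\epsilon/2$ the bottom level holds at least $M$ nodes. These nodes were infected exactly at time $t$ and have had no opportunity to recover, so they all lie in $\mathcal{I}$.

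Next I would condition on the realized infection topology. Given that the bottom level is a set $S$ of $N\geq M$ infected nodes, the unbiasedness assumption supplies a function $h(M)\to 1$ such that $\Pr(\text{some node of }S\text{ is reported}\mid\text{infection topology})\geq h(M)$; choose $M$ so large that $h(M)\geq 1-\epsilon/2$. Averaging over the infection process and combining with the previous paragraph gives $\Pr(E)\geq(1-\epsilon/2)^2\geq 1-\epsilon$ as soon as $t$ (hence the attainable $M$) is sufficiently large.

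The main obstacle is the first ingredient: establishing, in a usable and quantitative form, that a supercritical branching process conditioned on non-extinction has generation sizes diverging, and keeping careful track of the fact that the ``bottom level at time $t$'' genuinely sits at a generation index that tends to infinity with $t$ (this is where the deterministic one-step growth of the one-time-slot process and the boundedness of $t^I_k$ enter). A secondary, routine point is to apply the unbiased-sampling guarantee conditionally on the infection topology, so that the branching randomness and the sampling randomness are cleanly decoupled before taking expectations.
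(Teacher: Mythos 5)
Your proposal is correct and follows essentially the same route as the paper's proof: both rely on the instability (survival-or-extinction dichotomy) of the supercritical Galton--Watson process $B(g_{\min},q_{\min})$, the domination of that process by the one-time-slot infection process, and the unbiased-sampling guarantee applied to the (large) bottom level at time $t$. Your version is somewhat more quantitative in tracking $M$, $l_M$, and the bounded infection time $t^I_k$, but the underlying argument is the same.
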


\begin{proof}
Note the binomial branching process $B(g_{\min},q_{\min})$ is a Galton-Watson (GW) process \cite{Har_63} which requires each node has an i.i.d offspring distribution. The previous result about the instability of the Galton-Watson process in Theorem 6.2 in \cite{Har_63} proves that the GW process either goes to infinity or goes to $0.$ If the GW process survives, the number of offsprings goes to infinity as the level increases. Therefore, for sufficiently long time, the survived binomial branching process will have a  sufficiently large number of offsprings at the lowest level. Since the one time slot infection process always has at least the same number of children as the binomial branching process, the survived one time slot infection process will have enough number of infected nodes at the lowest level as time increases. According to the unbiased property of the partial observation, after sufficiently long time, the probability that at least one infected node in the lowest level is observed goes to 1 asymptotically, i.e.,
\[
\Pr(E)\geq 1-\epsilon.
\]
\end{proof}
\begin{figure}[t!]
\begin{centering}
  \includegraphics[width=0.5\columnwidth]{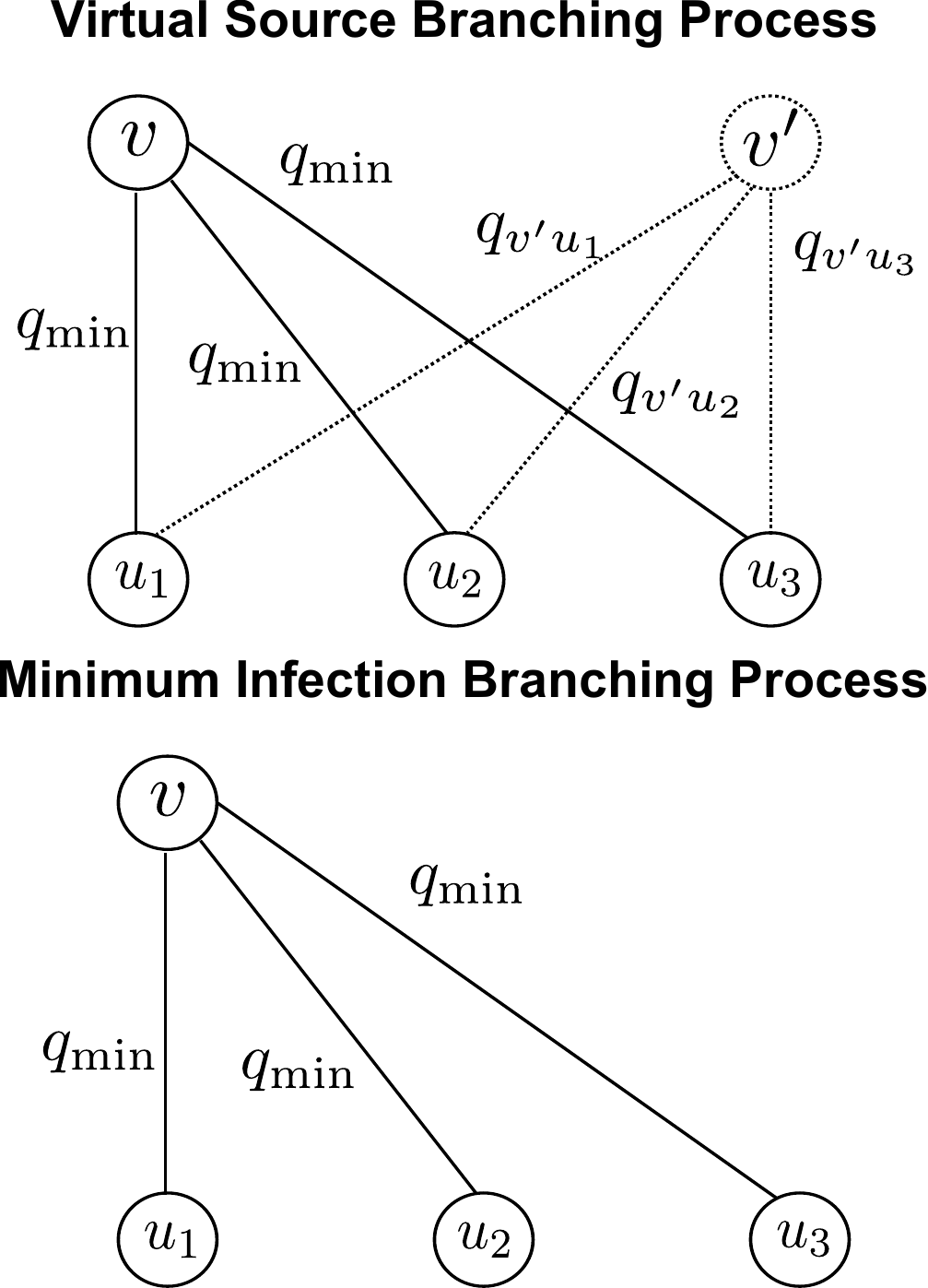}\\
  \caption{A Pictorial Description of the Two Auxiliary Processes in Lemma \ref{lem:2survive}}\label{figure:CouplingSystems}
  \end{centering}
\end{figure}

\section{Conclusion}
In this paper, we studied the problem of detecting the information source in a heterogeneous SIR model with sparse observations. We proved that the optimal sample path estimator on an infinite tree is a node with the minimum infection eccentricity with partial observations. With a fairly general condition, we proved that the estimator is within constant distance from the actual information source with a high probability with a sparse observation. Extensive simulation results showed our estimator outperforms other algorithms significantly. 
\balance
\bibliographystyle{IEEEtran}
\bibliography{information_source_main}

\end{document}